
\documentclass[aps,amsmath,twocolumn,amssymb]{revtex4}
\usepackage{graphicx}
\usepackage{dcolumn}
\usepackage{bm}
\usepackage{amsmath,amsthm}
\usepackage{hyperref}
\usepackage{color}
\usepackage{enumerate}

\newtheorem{theorem}{Theorem}
\newtheorem{lemma}{Lemma}
\newtheorem{definition}{Definition}

\begin{document}

\def\ket#1{|#1\rangle}
\def\bra#1{\langle#1|}
\newcommand{\ketbra}[2]{|#1\rangle\!\langle#2|}
\newcommand{\braket}[2]{\langle#1|#2\rangle}
\newcommand{\note}[1]{({\bf note: #1})}
\newcommand{\prob}[1]{{\rm Pr}\left(#1 \right)}
\newcommand{\Tr}[1]{{\rm Tr}\!\left[#1 \right]}
\newcommand{\expect}[2]{{\mathbb{E}_{#2}}\!\left\{#1 \right\}}
\newcommand{\var}[2]{{\mathbb{V}_{#2}}\!\left\{#1 \right\}}
\newcommand{\change}{}

\newenvironment{proofof}[1]{\begin{trivlist}\item[]{\flushleft\it
Proof of~#1.}}
{\qed\end{trivlist}}

\newcommand{\cu}[1]{{\textcolor{red}{#1}}}
\newcommand{\tout}[1]{{}}
\newcommand{\beq}{\begin{equation}}
\newcommand{\eeq}{\end{equation}}
\newcommand{\beqa}{\begin{eqnarray}}
\newcommand{\eeqa}{\end{eqnarray}}

\newcommand{\id}{\openone}

\newcommand{\observ}{A_i}
\bibliographystyle{apsrev}
\newcommand{\ensemble}{E_H}


\title{Information-theoretic equilibration:  \\
 the appearance of irreversibility under complex quantum dynamics}
\author{Cozmin Ududec$^{1,2}$,  Nathan Wiebe$^{2,3}$, and Joseph Emerson$^{2,4}$
}
\affiliation{$^1$ Department of Physics and Astronomy, University of Waterloo, Waterloo, ON, Canada}
\affiliation{$^2$ Institute for Quantum Computing, 200 University Ave West, Waterloo, ON, Canada}
\affiliation{$^3$ Department of Combinatics \& Opt., University of Waterloo, Waterloo, ON, Canada}
\affiliation{$^4$ Department of Applied Math, University of Waterloo, Waterloo, ON, Canada}

\begin{abstract}
The question of how irreversibility can emerge as a generic phenomena when the underlying mechanical theory is reversible has been a long-standing fundamental problem for both classical and quantum mechanics. 
We describe a mechanism for the appearance of irreversibility that applies to coherent, isolated systems in a pure quantum state. This equilibration mechanism requires only an assumption of sufficiently complex internal dynamics and natural information-theoretic constraints arising from the infeasibility of collecting an astronomical amount of measurement data. Remarkably, we are able to prove that irreversibility can be understood \emph{as typical} without assuming decoherence or restricting to coarse-grained observables, and hence occurs under distinct conditions and time-scales than those implied by the usual decoherence point of view. We illustrate the effect numerically in several model systems and prove that the effect is typical under the standard random-matrix conjecture for complex quantum systems. 
\end{abstract}

\maketitle




There has been considerable recent interest in the sufficient conditions for equilibration~\cite{PSW05,GoldsteinOnvN,Rigol,Zanardi,LPS+09,Brandaoetal,MRA11,SethThesis,Goldstein06,GemmerBook, Znidaric11,Rei10,Sho11,ShoFarr12,ReKa12}. 
   These approaches normally assume a decoherence mechanism resulting from the entanglement between the system of interest and a larger environment, or else assume highly coarse-grained observables.   In this work we describe a mechanism for equilibration 
that applies to isolated quantum systems in pure states, without assuming decoherence, restricting to subsystems, time-averaging or coarse-graining the observables. 
The mechanism for equilibration that we describe is an \emph{information-theoretic} one  that requires an assumption of complex internal dynamics  coupled  with realistic limitations to predicting the detailed evolution of the system and the experimental infeasibility of collecting an astronomically large amount of measurement data. This approach builds on earlier arguments by Peres \cite{Peres84} and Srednicki \cite{Srednicki1,Srednicki2} who proposed that the statistical complexity of the system's eigenvectors could be responsible for equilibration in isolated quantum systems.  We show that these conditions are sufficient to account for the effective (microcanonical) equilibration of the measurement statistics for natural choices of (non-degenerate) observable, meaning that, after a finite equilibration time, the dynamical state becomes effectively indistinguishable from the microcanonical state.
Hence information-theoretic equilbration (ITE) accounts for microcanonical equilibration in a way that is directly analogous to how classical chaos (mixing) accounts for the microcanonical equilibration of classically chaotic systems \cite{Gaspard,Dorfman}.
Remarkably, we are able to prove that ITE is universal for complex systems under the standard random-matrix conjecture~\cite{Meh04,Haa06}. Specifically, we prove that information-theoretic equilibration occurs with high probability for individual Hamiltonians drawn from two physically relevant ensembles:  the Gaussian Unitary Ensemble (GUE), which has a succesful history predicting unversal features of complex quantum systems~\cite{Meh04}, and a random local Hamiltonian (RLH) ensemble consisting of many-body systems restricted to two-body interactions. We then illustrate ITE numerically  in some surprisingly simple examples of Hamiltonian models under natural choices of (maximally fine-grained) observable: a two-field variant of the many-body Heisenberg Hamiltonian as well as the quantum kicked top~\cite{Haa06}, which is a single-body, classically chaotic system.  

Consider a pure state evolving under a Hamiltonian $H$, $\rho(t) = \exp(-iHt/\hbar) | \psi(0) \rangle \langle \psi(0) |  \exp(iHt/\hbar)$.  The dynamical state $\rho(t)$ can not reach the true equilibrium state $\sigma_\infty := \lim_{\tau \rightarrow \infty } \frac{1}{\tau} \int_0^\tau \rho(t) dt$~\cite{PSW05,Sho11} because the state remains pure. In particular,  the trace distance $\|\rho(t)-\sigma_{\infty}\|$, which characterizes the distinguishabiity under an \emph{optimal} choice of measurement operator, can be large throughout the evolution.  However, for a given complex system $H$ in a large Hilbert space, even a suboptimal measurement that enables distinguishability of these two states at any time $t$ may neither be known theoretically nor easily engineered experimentally.  For example, for a cubic lattice of dipolarly coupled spins, which is an analytically intractable system that has been probed experimentally for decades, only recently was a measurement procedure devised that revealed long-lived (multiple-quantum) coherence after equilibration of the free-induction decay~\cite{CLB05}.
Conceptually then we see that the appearance of equilibration can and does result from insufficient knowledge of, or control over, choice of observable. Our contribution is to characterize and illustrate conditions under which the signatures of purity and coherence are provably ``lost in Hilbert space'', and hence unobservable due to realistic limitations on both theoretical and experimental abilities. 

 We remark that our assumptions are conceptually similar and yet distinct from those of the usual decoherence argument, in which a system coupled to a reservoir appears to reach equilibrium (due to  entanglement between the system and reservoir) although the joint state of system plus reservoir remains pure.  That conclusion holds only if one assumes that one can not predict or perform the kind of (entangling) measurement across the combined system plus reservoir that would readily distinguish the actual state from the equilibrium one; that is, the argument goes through by restricting the set of observables to local ones. In contrast, our observation is that information-theoretic limitations alone are sufficient to account for the appearance of equilibration for accessible obervables on complex systems and so, contrary to the usual assumption (see \cite{HSZ98,Zurek98,PSW05}),  decoherence from a reservoir is not necessary from an explanatory point of view.   More practically, whereas the time-scale for equilibration under decoherence depends on the strength of the coupling to the reservoir, our mechanism does not and predicts equilibration on a distinct, and potentially shorter, time-scale. 
Furthermore, our approach is a natural quantum analog of classical \emph{microcanonical} equilibration \cite{Gaspard,Dorfman}.


We consider a quantum system with some kinematically accesible Hilbert space that is finite-dimensional $\mathcal{H} = \mathcal{C}^D$.  In order to show that we do not require coarse-graining, we consider a  maximally fine-grained (ie, non-degenerate) observable $A$ acting on $\mathcal{H}$,  where $A = \sum_{k=1}^D a_k \hat{P}_k$  with rank-one orthogonal projectors $\hat{P}_k$. Our argument applies also to local or other coarse-grained observables (which can be represented by degeneracies).  For simplicity of analysis we consider the (most adverserial) setting where the system starts in a pure state that is maximally localized with respect to $A$, ie, $\rho_0 = | a_i \rangle\! \langle a_i| $, and then examine how the pure states spreads out over the eigenbasis of $A$ under time-evolution given a Hamiltonian $H$.
The empirical question of whether the system appears to approach (microcanonical) equilibrium given some observable $A$ corresponds to asking whether the experimental measurement statistics for the evolved pure state can be distinguished from those of the equilibrium state. Hence the relevant quantities for this task are the probabilities over distinct outcomes $k$, 
\beq
\prob{k | \rho(t)} =
\Tr{\hat{P}_k U(t) \rho_0  U^\dag(t) },\label{eq:pr}
\eeq
 and the goal is to distinguish $\rho(t)$ from  $\sigma_\infty$ by sampling the distribution in~\eqref{eq:pr}.  For simplicity we focus on cases where $\sigma_\infty = \id/D  $, but  $\sigma_\infty$ may differ from the micro-canonical state $\rho_{\rm mc} := \id/D$ or any thermal state \cite{Rigol}.

\begin{definition}\label{info theor equil}
\change A Hamiltonian $H$ acting on $\mathcal{H}=\mathbb{C}^D$ exhibits \emph{information theoretic equilibration} (ITE)  with respect to  an observable $A$ at a time $t$,  if the outcome distribution $\prob{k | \rho(t)}$ can only be distinguished from the micro-canonical distribution ${\rm{Pr}_{mc}}(k)$ with probability at least $1-O(1/\rm{poly}(D))$ by (a) taking a number of samples from $\prob{k|\rho(t)}$ that scales at least as $O({\rm poly}(D))$ or (b) performing any information processing that requires at least $O({\rm poly}(D))$ arithmetic or logical operations.
\end{definition}

This definition emphasizes that although the exact quantum distribution for the system may be \emph{in principle} distinguishable from the microcanonical distribution, the two are effectively indistinguishable if the resources needed to distinguish them exceed those practically available. We delineate the practical from the impractical  by disallowing resources (the number of measurements taken and computational time used in their analysis) that grow polynomially with the Hilbert space dimension (and hence exponentially with the number of subsystems). Of course, for a different physical scenario, a different cut-off may be appropriate.  Our condition (b) includes a restriction on computational resources because the two distributions could be distinguished using fewer samples if the $\prob{k|\rho(t)}$ can be pre-computed.  In other words,  information-theoretic equilibration is relevant precisely when the system is in a sufficiently large Hilbert space that such a pre--computation is infeasible.  We represent our ignorance of $\prob{k|\rho(t)}$ by assuming that it is drawn from a distribution that is invariant under permutations of outcome labels.
\change We now show in the following theorem that, without the ability to efficiently predict $\prob{k|\rho(t)}$, ITE with respect to a particular measurement occurs when the \emph{outcome variance},
\begin{equation}
\var{\prob{k}}{k} :=  D^{-1} \sum_{k=0}^{D-1}  \left[ \prob{k} - D^{-1}  \right]^2,\label{eq:varprob}
\end{equation}
 is sufficiently small, which is typical of cases where the underlying dynamics has no constants of motion.  Proof is provided in the supplemental material.

\begin{figure}[t!]
\includegraphics[width=\linewidth]{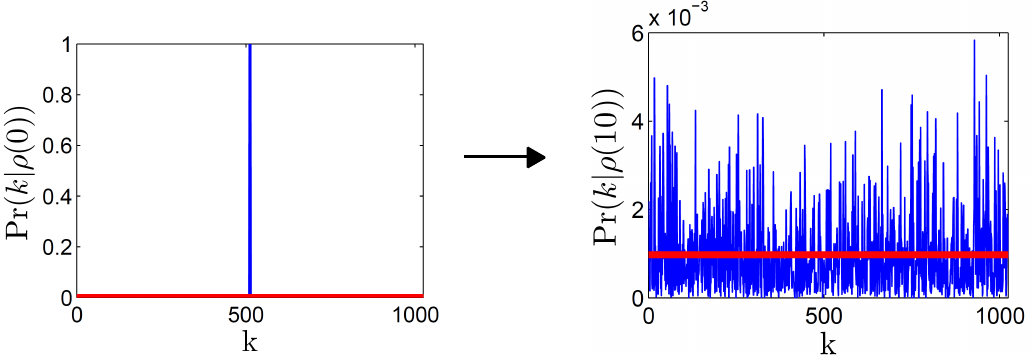}
\caption{\change We plot $\prob{k|\rho(t)}$ for Pauli--Z measurements given by quantum theory (blue) and the uniform distribution (red) for a random local Hamiltonian acting on $10$ qubits at $t=0$ and for $t>t_{\rm eq}$.  ITE arises  from the difficulty in distinguishing  quantum fluctuations from sampling errors for $t>t_{\rm eq}$.\label{fig:equib}}
\end{figure}

\begin{theorem}\label{thm:distinguish} \change
Consider an unknown distribution that is promised to be with equal probability either  (a)  the uniform distribution on the set $\mathcal{S}=\{1,\ldots,M\}$ or (b) an unknown distribution $P(k)$ that is drawn from a distribution over probability distributions on $\mathcal{S}$ with outcome variances that scale as $O(M^{-2})$ such that $\prob{P({k})}$ is invariant with respect to permutations of $\mathcal{S}$.  With high--probability, the probability of correctly distinguishing between (a) and (b) after obtaining $N$ samples is at most $1/2 + O(N/M^{1/4})$.
\end{theorem}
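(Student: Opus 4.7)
The plan is to recast the distinguishing task as a Bayesian hypothesis test with equal priors on hypotheses (a) and (b), and to upper-bound the optimal decision rule's advantage via a second-moment (chi-squared) argument. By Neyman--Pearson, the maximum success probability of any test equals $\tfrac{1}{2}(1 + \|\mu_N - U^{\otimes N}\|_{TV})$, where $\mu_N(\vec k) := \mathbb{E}_P\prod_{i=1}^N P(k_i)$ is the $N$-sample mixture distribution obtained by first sampling $P$ from the permutation-invariant ensemble and then taking $N$ i.i.d.\ samples from $P$, and $U$ is the uniform distribution on $\mathcal{S}$. Cauchy--Schwarz gives $\|\mu_N - U^{\otimes N}\|_{TV} \leq \tfrac{1}{2}\sqrt{\chi^2(\mu_N, U^{\otimes N})}$, so it suffices to bound the chi-squared divergence.

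Introducing an independent copy $P'$ from the ensemble yields the clean identity
\begin{equation*}
\chi^2(\mu_N, U^{\otimes N}) + 1 \;=\; M^N\,\mathbb{E}_{P,P'}\!\left[\langle P, P'\rangle^N\right],
\end{equation*}
where $\langle P, P'\rangle := \sum_k P(k)P'(k)$. Setting $\delta_k := P(k) - 1/M$, $\delta'_k := P'(k) - 1/M$, and $X := \sum_k \delta_k\delta'_k$, we have $\langle P, P'\rangle = 1/M + X$. Permutation invariance of the ensemble forces $\mathbb{E}[\delta_k\delta_l]$ to take only two values (depending on whether $k = l$ or not), each controlled by the outcome-variance hypothesis $\mathbb{V}_k[P(k)] = O(M^{-2})$ together with $\sum_k \delta_k = 0$. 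A direct calculation then gives $\mathbb{E}[X]=0$ and $\mathbb{E}[X^2] = O(M^{-3})$, so the leading ($j=2$) term in the expansion $M^N\mathbb{E}[\langle P,P'\rangle^N] = \sum_j \binom{N}{j}M^j\,\mathbb{E}[X^j]$ contributes $O(N^2/M)$ to the chi-squared.

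The main technical obstacle is bounding the higher-order moments $\mathbb{E}[X^j]$ for $j \geq 3$: permutation invariance decomposes each such moment into a sum indexed by partitions of the summation indices, with the surviving pieces all controlled by the same variance hypothesis, and careful combinatorial bookkeeping is needed to check that they do not overwhelm the $j=2$ estimate. The upshot is $\chi^2(\mu_N, U^{\otimes N}) = O(N^2/M)$, whence $\|\mu_N - U^{\otimes N}\|_{TV} = O(N/\sqrt{M})$, which in particular is $O(N/M^{1/4})$, giving the claimed success-probability bound. The ``with high probability'' qualifier is then obtained from a short concentration argument over the random $P$, showing that for typical $P$ drawn from the ensemble the per-realization distinguishing probability closely tracks this Bayesian average and hence inherits the same bound.
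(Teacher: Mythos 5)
Your approach is genuinely different from the paper's. The paper runs a collision-counting (birthday-paradox) argument: it derives from the outcome-variance hypothesis the $\ell_\infty$ bound $P(k)\le O(M^{-1/2})$, argues that the coincidence pattern is the only permutation-invariant statistic, shows that the probability of observing any coincidence is $O(N^2/\sqrt{M})$ under either hypothesis, and then propagates this through Bayes' theorem and a Chebyshev concentration step to get the $O(N/M^{1/4})$ bound. You instead go through the Neyman--Pearson optimal test, bound the total variation by $\tfrac12\sqrt{\chi^2(\mu_N,U^{\otimes N})}$, and reduce the chi-squared divergence to moments of the random variable $X=\sum_k\delta_k\delta_k'$. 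In principle this is a cleaner, more information-theoretic route, and if your claimed bound $\chi^2=O(N^2/M)$ held it would actually be \emph{stronger} than the paper's, giving a $\sqrt M$ rather than $M^{1/4}$ threshold (which would in fact match the tight sample complexity of uniformity testing).

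However, there is a genuine gap at exactly the point you flag as a ``technical obstacle'': the moments $\mathbb{E}[X^j]$ for $j\ge 3$ are \emph{not} controlled by the stated hypotheses. The outcome-variance bound $\mathbb{V}_k\{P(k)\}=O(M^{-2})$ plus permutation invariance control only second moments of the $\delta_k$, not higher ones, and the higher moments can be large. Concretely, consider the ensemble where $k^*$ is chosen uniformly from $\{1,\dots,M\}$ and $P$ is the uniform distribution except that $P(k^*)=1/M+c/\sqrt M$ (with the excess spread uniformly over the remaining outcomes). This ensemble is permutation invariant and has outcome variance $\Theta(c^2 M^{-2})$, so it satisfies all the hypotheses of the theorem. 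For two independent draws $P,P'$ one has $M\langle P,P'\rangle\approx 1+c^2$ on the event $k^*=k'^*$ (probability $1/M$) and $M\langle P,P'\rangle\approx 1$ otherwise, so
\begin{equation*}
\chi^2(\mu_N,U^{\otimes N}) = M^N\,\mathbb{E}\!\left[\langle P,P'\rangle^N\right]-1\;\approx\;\frac{(1+c^2)^N}{M},
\end{equation*}
which is exponential in $N$ and already $\Omega(1)$ once $N\gtrsim\log M$. Equivalently, in your expansion $\sum_j\binom{N}{j}M^j\mathbb{E}[X^j]$, the $j\ge 3$ terms are each $\Omega(N^j/M)$ and sum to the geometric blow-up; the $j=2$ term does not dominate. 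So the chi-squared bound becomes vacuous for any interesting $N$, and the claim $\chi^2=O(N^2/M)$ is false for this admissible ensemble. The paper's collision-counting argument avoids this precisely because it only needs the $\ell_\infty$ bound $P(k)\le O(M^{-1/2})$, which is robust to these rare-but-heavy outcomes; the price it pays is the looser $M^{1/4}$ threshold. To rescue a chi-squared-style argument one would need to truncate away the bad event $\{k^*=k'^*\}$ (or otherwise bound TV directly rather than through $\chi^2$), which is a nontrivial additional step not present in your proposal.
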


\change Theorem~\ref{thm:distinguish} shows that  $N\ge O(M^{1/4})$ samples are needed to distinguish the distributions with probability substantially greater than $1/2$, which is prohibitively expensive in the case of a non--degenerate projective measurement because $M=D$.  Similarly, if we consider a generalized measurement with $M> D$ (as is relevant in the case of SIC POVMs), Theorem~\ref{thm:distinguish} similarly shows that distinguishing the distributions is hard.  Finally,  it is straightforward to show that coarse grained measurements with $M<D$  do not provide an advantage  under the assumptions of Theorem~\ref{thm:distinguish} because the permutation invariance of the prior distribution over $\prob{k}$ prevents such strategies from succeeding with high probability.   Another consideration is that $\mathbb{V}_k\{\prob{k}\}= O(M^{-2})$ does not imply that the fluctuations  are negligible \emph{in principle}; in fact, it is consistent with $\|\prob{k|\rho(t)}-1/M\|_1$ being constant, which implies that an optimal measurement exists that can distinguish the two distributions efficiently~\cite{nielsen}.  Hence Theorem~\ref{thm:distinguish}  is only meant to give a hardness result for distinguishing two states  given the induced distributions with respect to a fixed measurement, and does not apply to cases where the optimal measurement is both known a priori and experimentally accessible. Indeed the exceptions to our assumptions are relevant, e.g., when the system admits constants of the motion that are simple relative to the selected observable.

Which Hamiltonian systems satisfy the assumptions of Theorem \ref{thm:distinguish}, for natural choices of $A$, and hence exhibit information theoretic equilibration? Pure-state fluctuations satisfying the scaling of Theorem \ref{thm:distinguish} were observed already in the two-body, classically chaotic quantum system studied in Refs.~\cite{EB01,EB01b}, which motivated the question: was the behaviour of that complex system exceptional, or was it evidence of a universal equilibration behaviour for closed chaotic systems? If the latter, does this effect carry over from chaotic quantum systems to sufficiently complex many-body quantum systems? 

To answer these questions, we take the enormously succesful approach of Wigner and Dyson and the army of theoretical physicists following them who have demonstrated that certain features of appropriate random matrix ensembles (RME) can predict typical properties of complex quantum systems.  This is known as the random-matrix conjecture, and it has provided accurate predictions of the spectral properties of heavy nuclei \cite{Meh04}, spectral and eigenvectors statistics of quantum chaos models \cite{KMH88,KZ91}, and quantum transport in mesoscopic structures \cite{Beenakker}. 
Consider any ensemble that has a mean that equilibrates information theoretically with respect to $A$ and is sufficiently sharply peaked about that mean, then individual systems from the ensemble will satisfy Theorem~\ref{thm:distinguish} with (very) high probability.   This phenomenon, known as \emph{concentration of measure}, is central to the random-matrix conjecture, and it is important to note that our averages over the ensemble are not an implicit appeal to decoherence or mixing, but a  method for estimating the \emph{typical properties} of \emph{individual systems} within the ensemble. 

\change The system must be allowed to evolve for a sufficient amount of time for the state to spread out
from a distribution with support on initial eigenstate of $A$ to one that obeys $\prob{k|\rho(t)}\approx 1/D$ for our result to hold (see~Fig.~\ref{fig:equib}).
We refer to the earliest such time as the equilibration time, which we denote $t_{\rm eq}$.  For an individual system, we also require that for most $t\ge t_{\rm eq}$ that $\prob{k|\rho(t)}$ is nearly maximally spread out.  If the Hamiltonian is drawn from an ensemble, it is then possible to define an equilibration time such that almost all Hamiltonians drawn from the ensemble achieve ITE with respect to $A$ and $t\ge t_{\rm eq}$:


\begin{figure}[t!]
\includegraphics[width=\linewidth]{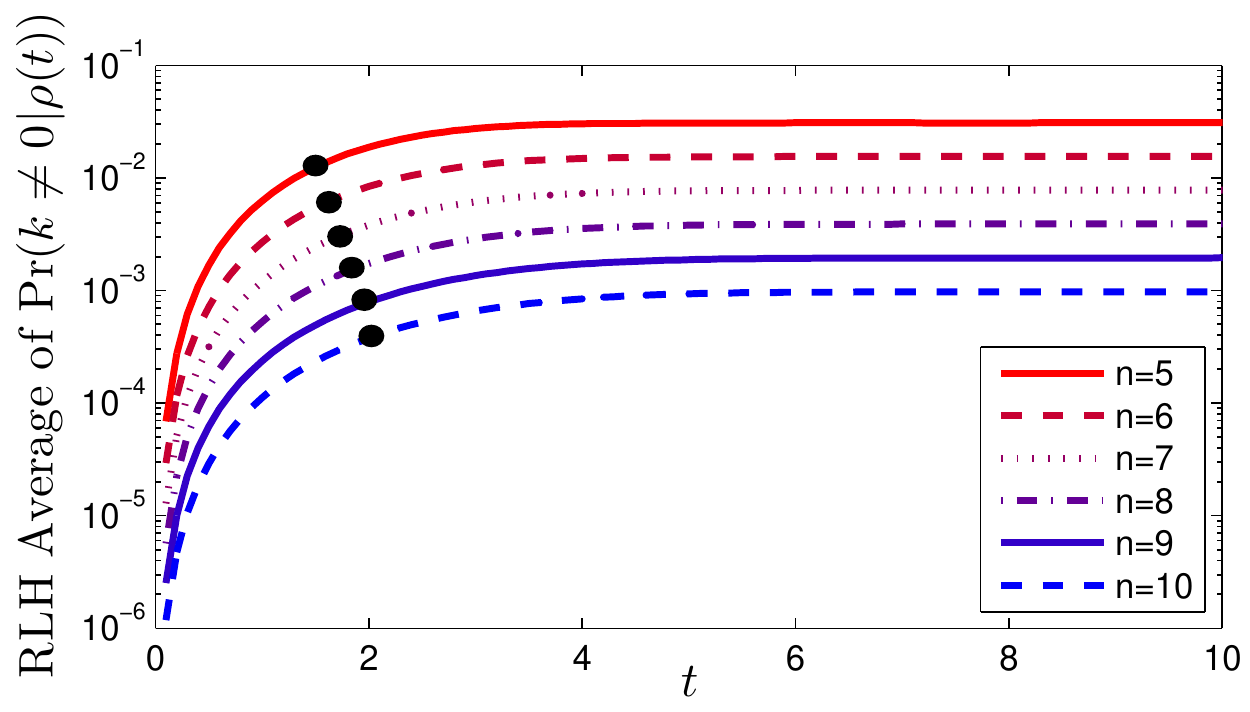}
\caption{The RLH ensemble average of the probabilities $\prob{k\ne 0|e^{-iHt}\ketbra{0}{0}e^{iHt}}$ of the evolved state for $250$ random Hamiltonians  plotted as a function of time for $5$--$,\ldots,10$--qubit systems.  The circles show $t_{\rm eq}$ for each $n$, which scale roughly as $O({\log(D)})$.  
} \label{fig:RLHeq}
\end{figure}

\begin{lemma}\label{lem:concentrate}
\change
 Almost all Hamiltonians sampled from an ensemble of Hamiltonians equilibrate information theoretically with respect to a fixed observable $A$  and time $t>t_{\rm eq}$, in the limit as $D\rightarrow \infty$, if the ensemble average and variance (denoted $\mathbb{E}_{\ensemble}$ and $\mathbb{V}_{\ensemble}$ respectively) of the outcome variance obey for all $t\ge t_{\rm eq}$
\begin{align}
\mathbb{E}_{\ensemble}\!\left\{\mathbb{V}_k\{\prob{k|\rho(t)}\} \right\}&\le O(D^{-2})\label{eq:conc1}\\
\mathbb{V}_{\ensemble}\!\left\{\mathbb{V}_k\{\prob{k|\rho(t)}\}\right\}&\le O(D^{-4})\label{eq:conc2}.
\end{align}
\end{lemma}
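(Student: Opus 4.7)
The plan is to lift the two ensemble-level moment bounds on the outcome variance to a tail bound for a single Hamiltonian drawn from $\ensemble$ via Chebyshev's inequality, and then to apply Theorem~\ref{thm:distinguish} to each Hamiltonian in the resulting high-probability set. Fix $t \ge t_{\rm eq}$ and view $X(H) := \mathbb{V}_k\{\prob{k|\rho(t)}\}$ as a real-valued random variable on $\ensemble$. Hypotheses~\eqref{eq:conc1}--\eqref{eq:conc2} read precisely $\mathbb{E}_{\ensemble}\{X\} \le c/D^{2}$ and $\mathbb{V}_{\ensemble}\{X\} \le c'/D^{4}$, so by Chebyshev's inequality
\begin{equation*}
\mathrm{Pr}_{\ensemble}\!\left[X(H) \ge \mathbb{E}_{\ensemble}\{X\} + \delta\right] \le \frac{\mathbb{V}_{\ensemble}\{X\}}{\delta^{2}} \le \frac{c'}{D^{4}\delta^{2}} .
\end{equation*}
Choosing $\delta = D^{-2+\epsilon}$ for any fixed $\epsilon > 0$ yields a failure probability of $O(D^{-2\epsilon})$, which vanishes as $D \to \infty$. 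Hence a $1 - O(D^{-2\epsilon})$ fraction of $H \sim \ensemble$ satisfies $X(H) \le O(D^{-2+\epsilon})$.

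To close the argument we invoke Theorem~\ref{thm:distinguish} with $M = D$ on each $H$ in this high-probability set. The outcome-variance hypothesis of that theorem is met up to the subpolynomial factor $D^{\epsilon}$, which for any fixed $\epsilon > 0$ can either be absorbed into the $O(\cdot)$ constant or tracked through the proof of Theorem~\ref{thm:distinguish}: the $\Omega(D^{1/4})$ lower bound on the number of samples required to distinguish $\prob{k|\rho(t)}$ from $\mathrm{Pr}_{\rm mc}(k)$ degrades only to $\Omega(D^{1/4 - O(\epsilon)})$, which is still super-constant and remains consistent with the ITE bound of Definition~\ref{info theor equil}. This delivers the claim that, in the $D \to \infty$ limit, almost all $H \in \ensemble$ exhibit ITE with respect to $A$ at time $t$.

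The step we expect to be the main obstacle is not the Chebyshev estimate but verifying the permutation-invariance premise of Theorem~\ref{thm:distinguish} for the prior on $\prob{\cdot|\rho(t)}$ induced by $H \sim \ensemble$, namely that it is symmetric under relabellings of the eigenbasis of $A$. For unitarily invariant ensembles such as the GUE this is automatic from the invariance of the measure, but for the RLH ensemble it must be argued separately — e.g., by showing that randomness in the two-body couplings washes out any preferred basis on the timescale $t \ge t_{\rm eq}$, perhaps via an auxiliary randomization of labels that leaves the law of $H$ invariant. A secondary, more routine issue is that the lemma as stated fixes $t$; if one wishes the conclusion to hold uniformly for all $t \ge t_{\rm eq}$, a union bound over a $\mathrm{poly}(D)$-size time grid, combined with Lipschitz continuity of $X(H)$ in $t$ (controlled by $\|H\|$), costs only logarithmic overhead and does not affect the $D \to \infty$ limit.
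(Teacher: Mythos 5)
Your proof follows the paper's argument exactly: lift the two moment bounds to a per-Hamiltonian tail bound via Chebyshev's inequality (with the same $\delta = D^{-2+\epsilon}$ choice giving failure probability $O(D^{-2\epsilon})$), then invoke Theorem~\ref{thm:distinguish} with $M=D$ and Definition~\ref{info theor equil}. Your explicit tracking of the $D^{\epsilon}$ slack, and your flagging of the permutation-invariance premise and the uniformity-in-$t$ issue, are actually more careful than the paper's write-up, but they do not change the route — they tighten it.
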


Proof is given in the supplemental material. 

We now give our first evidence for universality by proving that ITE is typical for the important Gaussian Unitary Ensemble (GUE),  which defines an invariant measure on the set of Hamiltonians. The GUE  is the appropriate model a highly successful model for many properties of complex physical systems with no hidden symmetries \cite{Meh04}. 

\begin{theorem}\label{thm:gue}
Take a non-degenerate observable $A$ acting on $\mathcal{H}=\mathbb{C}^D$, and an initial pure state $\rho_0 = \ket{x}\!\bra{x}$ which is an eigenstate of $A$. Almost all Hamiltonians drawn from GUE then equilibrate information theoretically with respect to $A$ and $t\ge t_{\rm eq}$ in the limit as $D\rightarrow \infty$ for $t_{\rm eq}(D)=O(D^{-1/6})$.
\end{theorem}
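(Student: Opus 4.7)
The plan is to invoke Lemma~\ref{lem:concentrate}, reducing the claim to the verification of the two concentration bounds \eqref{eq:conc1} and \eqref{eq:conc2} for the outcome variance $V(t):=\var{\prob{k|\rho(t)}}{k}$ when $H$ is drawn from the GUE, and then to identify the smallest $t_{\rm eq}(D)$ above which both bounds hold uniformly in $t$. Expanding $\prob{k|\rho(t)}=|\bra{k}e^{-iHt}\ket{x}|^2$ in the eigenbasis of $H$ writes $V(t)$ as a quartic polynomial in the entries $U_{kj}:=\braket{k}{E_j}$ of the eigenvector matrix, weighted by phase factors $e^{-i(E_j-E_l)t}$ from pairs of eigenvalues.

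The key structural feature of the GUE that makes the calculation tractable is unitary invariance: the eigenvector matrix $U$ is Haar-distributed on $\mathrm{U}(D)$ and statistically independent of the spectrum $\{E_j\}$. Averaging over $U$ first, via the Weingarten formula, reduces $\expect{V(t)}{E_H}$ to a linear combination of expressions of the form $D^{-p}\,\mathbb{E}\bigl[\bigl|\sum_j e^{-iE_jt}\bigr|^{2q}\bigr]$ with rational-in-$D$ coefficients. The leading time-dependent piece is proportional to $D^{-2}\,\mathbb{E}\bigl[|\mathrm{Tr}(e^{-iHt})|^2\bigr]$, i.e.\ the GUE \emph{spectral form factor}, whose smooth (disconnected) part is the Fourier transform of the Wigner semicircle density, a Bessel-type function whose envelope decays as a power of the dimensionless time $\sqrt{D}\,t$, with the universal sine-kernel ramp and plateau appearing only on much longer Heisenberg scales. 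Matching this envelope against the $O(D^{-2})$ threshold in \eqref{eq:conc1} pins down the equilibration time and yields the stated scaling $t_{\rm eq}=O(D^{-1/6})$ in the regime where the smooth piece dominates.

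The main technical obstacle is the second moment bound \eqref{eq:conc2}. Here $\expect{V(t)^2}{E_H}$ generates fourth-order Haar moments (Weingarten sums indexed by $S_4$ encoding the cycle structure of paired index permutations) as well as up to four-point GUE eigenvalue correlators through phases of the form $e^{-i(E_{j_1}-E_{j_2}+E_{j_3}-E_{j_4})t}$. The delicate point is that the leading-in-$D$ part of $\expect{V(t)^2}{E_H}$ must cancel exactly against $(\expect{V(t)}{E_H})^2$ so that the ensemble variance is suppressed by the additional factor of $D^{-2}$ required by \eqref{eq:conc2}. I would organize this cancellation by grouping Weingarten terms according to their double-coset structure and then apply the cluster decomposition of GUE multi-point correlation functions, bounding the residual connected contributions by products of two-point functions times inverse powers of $D$ to obtain $\var{V(t)}{E_H}=O(D^{-4})$ uniformly for $t\ge t_{\rm eq}$. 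With both bounds in hand, Lemma~\ref{lem:concentrate} promotes these ensemble-level statements to the conclusion that, with probability $1-o(1)$ as $D\to\infty$, an individual GUE Hamiltonian exhibits ITE with respect to $A$ for every $t\ge t_{\rm eq}(D)=O(D^{-1/6})$.
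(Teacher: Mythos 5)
Your plan reproduces the architecture of the paper's proof: invoke Lemma~\ref{lem:concentrate}, exploit the GUE factorization into Haar-distributed eigenvectors times an independent spectrum, compute the eigenvector averages by Haar moment calculus, push the remaining dependence into the spectral form factor $\Gamma(t,D)=\mathbb{E}_{\rm spec}|\Tr{e^{-iHt}}|^2$, and read off $t_{\rm eq}=O(D^{-1/6})$ from the Bessel envelope of the semicircle Fourier transform. That all matches, and your remark that the sine-kernel ramp and plateau live on much longer time scales and are therefore irrelevant to $t_{\rm eq}$ is also correct.

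Two specific points deserve correction, though, both in how you treat the variance bound~\eqref{eq:conc2}. First, the Haar moments that enter there are \emph{eighth}-order, not fourth: $\prob{k|\rho(t)}$ is a $(2,2)$ polynomial in $C$, so $\mathbb{V}_k\{\prob{k|\rho(t)}\}$ is a $(4,4)$ polynomial and its square is an $(8,8)$ polynomial, which requires the Weingarten / permutation-projector calculus on $S_8$ (this is exactly the $8$-design condition the paper emphasizes). The paper then sidesteps the full combinatorics by expanding $M^{-1}=\id/D^8+O(D^{-9})$. Second, you do not need, and should not try to establish, an exact cancellation between the leading parts of $\mathbb{E}_{\ensemble}\{V^2\}$ and $(\mathbb{E}_{\ensemble}\{V\})^2$. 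The target~\eqref{eq:conc2} only requires $\mathbb{V}_{\ensemble}\{V\}\le O(D^{-4})$, and since $(\mathbb{E}_{\ensemble}\{V\})^2$ is itself already $O(D^{-4})$ by~\eqref{eq:conc1}, it suffices to show $\mathbb{E}_{\ensemble}\{V^2\}=O(D^{-4})$. The paper gets this cheaply by bounding the single fourth moment $\expect{\prob{k|\rho(t)}^4}{{\rm spec},C}=O(D^{-4})$ (using the $M^{-1}\approx\id/D^8$ approximation together with Cauchy--Schwarz bounds on the integrals of higher-point eigenvalue correlators such as $\Tr{P e^{iE_1}Pe^{iE_2}Pe^{-2iE_3}}\le D$) and then invoking Cauchy--Schwarz once more to control $\mathbb{E}_{\ensemble}\{\prob{j|\rho(t)}^2\prob{k|\rho(t)}^2\}$. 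Organizing the calculation around a hoped-for exact cancellation would both overcomplicate the bookkeeping and require more detailed control of the connected multi-point correlators than the order-of-magnitude bound actually demands.
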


The proof is in the supplemental material. This theorem tells us the remarkable result that, as $D$ increases,  the overwhelming majority of Hamiltonians will cause an initially pure, localized state to spread out over the non-degenerate eigenbasis of  $A$ in a sufficiently uniform manner, to become practically indistinguishable from the microcanonical state for any $t\ge t_{\rm eq}$.
Thanks to decades of numerical studies of GUE as a model of complex many-body systems \cite{Beenakker} and few-body quantum chaos systems \cite{Haa06}, it is known that GUE is a good predictor of short-range spectral fluctuations  \cite{BGS84},  and low-order moments of eigenvector components \cite{KMH88,KZ91}, but not a good predictor of long-range spectral fluctuations \cite{Haa06}. Our proof of the smallness of the fluctuations using GUE (for $t> t_{\rm eq}$) depends only on low-order moments of the eigenvector components, i.e., unitary $t$-design condition with $t$=8 \cite{Dankertetal}  (see supplementary material for details).  Hence we expect this  aspect of the GUE model to be reflected in physically relevant  Hamiltonian systems. However, we do not expect the GUE prediction for the equilibration time-scale to be physically relevant (clearly the value of $t_{\rm eq}$ for GUE is unrealistically short) because it depends on long-range spectral fluctuations.  We now confirm both of these expectations  for two RMEs consisting of many-body spins with two-body interactions, and conclude by demonstrating ITE with respect to \change tensor product measurements on a physically relevant time-scale in some example model systems.

\begin{figure}[t!]
\includegraphics[width=\linewidth]{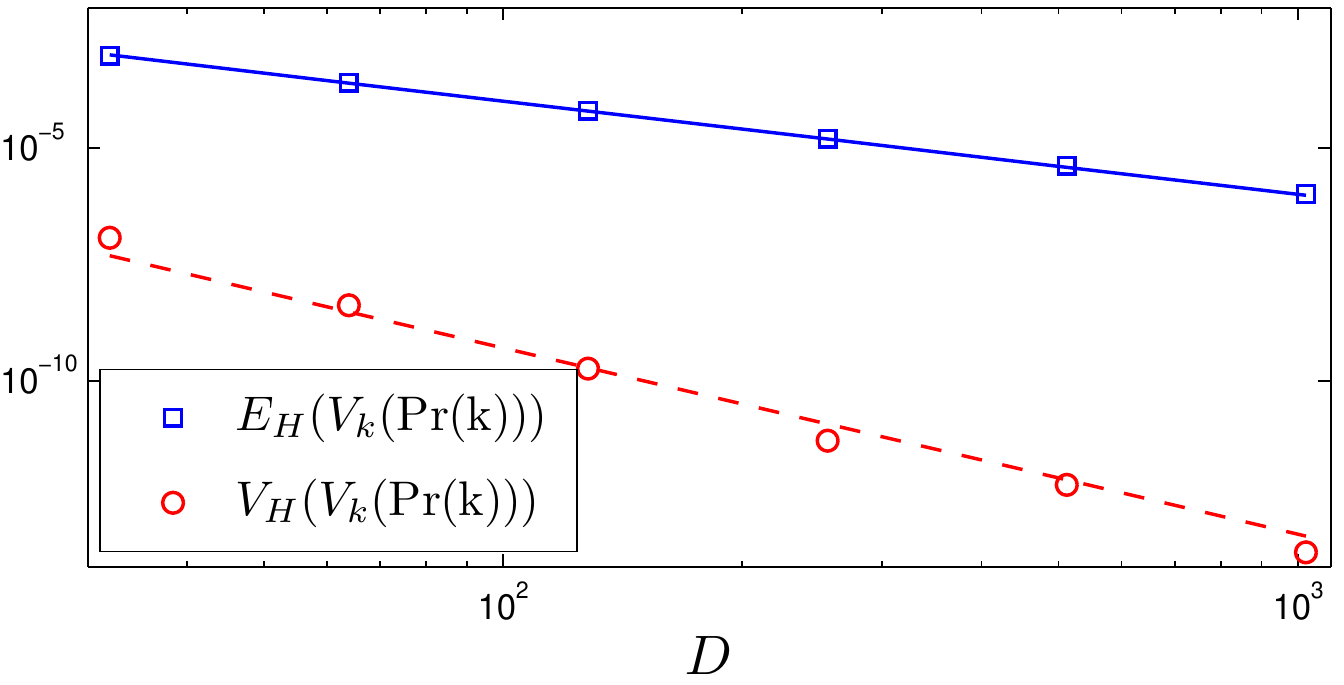}
\caption{Numerically computed expectation values and variances over the RLH ensemble of the outcome variance computed at $t=10$ where $t_{\rm eq}\lessapprox 2$ for $\rho(0)=\ketbra{0}{0}$.  The data was obtained for $250$ randomly chosen Hamiltonians, and shows that $\mathbb{V}_{E_H}(\mathbb{V}_k(\prob{k|\rho(t)})))\approx 0.05 D^{-4}$ and $\mathbb{E}_{E_H}(\mathbb{V}_k(\prob{k|\rho(t)}))\approx D^{-2}$. \label{fig:RLH}}
\end{figure}

We construct an ensemble of random local Hamiltonians (RLH) on $n$ spins, consisting of $2$--body interactions  between $2$--level quantum systems, as follows:
\begin{equation}
H=\|H\|^{-1}\left(\sum_{i=1}^n\sum_{p} a_{i,p}\sigma_p^{(i)}+ \sum_{i<j}\sum_{p,p'} b_{i,j,p,p'}\sigma_p^{(i)}\sigma_{p'}^{(j)}\right), \nonumber
\end{equation}
where $p,p' \in \{X,Y,Z\}$, and each $a_{i,p}$ and $b_{i,j,p,p'}$ is a Gaussian random variable with mean $0$ and variance $1$.  
We consider the observable $A= \sum_{j=0}^{D-1} a_j \ket{j}\!\bra{j}$, corresponding to a non-degenerate projective measurement in the eigenbasis of $\sigma_z^{\otimes n}$.  
RLH is clearly invariant under permutation of qubit labels and local rotations of each qubit, and therefore our results also apply to any $A'$ that differs from $A$ by local rotations.
Fig.~\ref{fig:RLHeq} shows that pure states evolving under individual elements of RLH approach equilibrium as $D$ increases.
We estimate the equilibration time using the location of the inflection points of the curves in Fig.~\ref{fig:RLHeq}, and find it scales as $O({\log(D)})$, which is characteristic of quantum chaotic systems~\cite{EB01,EB01b,Haa06}.
Fig.~\ref{fig:RLH}  shows that the outcome variance for a typical Hamiltonian chosen uniformly from the RLH ensemble satisfies the requirementes of Lemma~\ref{lem:concentrate}, which implies that almost all RLH Hamiltonians will equilibrate information theoretically with respect to any non-degenerate measurement in the class $A'$ as $D\rightarrow \infty$ for any $t\ge t_{\rm eq}$. 
We further strengthen the physical relevance of this result by showing that ITE still holds for \change $t\ge t_{\rm eq}$  when the 2-local Hamiltonians are constrained to have nearest-neighbor interactions in one-- and two--dimensions (see supplemental material). 

\begin{figure}[t!]
\includegraphics[width=0.95\linewidth]{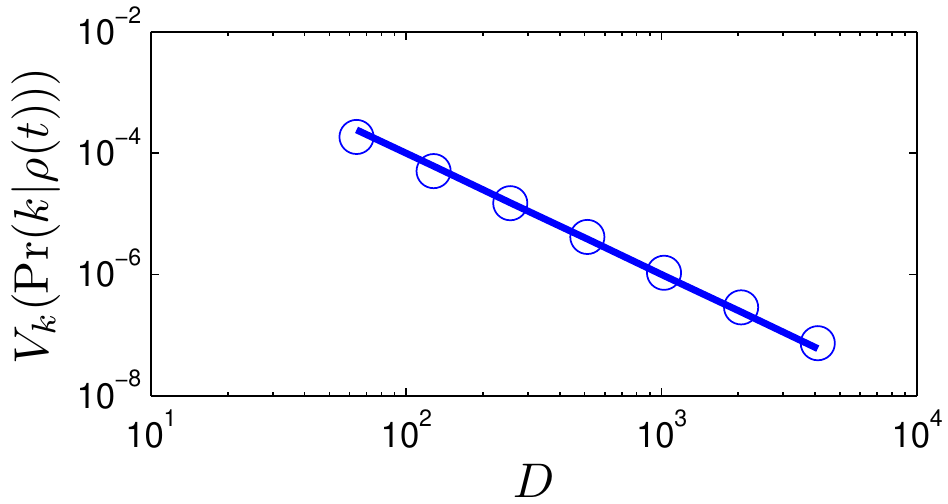}
\caption{Evidence of ITE for \change $t\ge t_{\rm eq}$ in an extremely simple  many-body system with nearest-neighbor interactions for increasing number of spins $n$ ($D=2^n$). The measurement consists of readout of each spin along the z-axis. The plot shows $\mathbb{V}_k\{\prob{k|\rho(t)}\}\approx 1.6D^{-2}$ for the Hamiltonian \eqref{eq:genheis} with $t=20$ where $t_{\rm eq}\simeq15$  (see supplemental material).}
\label{fig:vscaleH}
\end{figure}

We now give two simple examples of individual model systems that exhibit information theoretic equilibration: a many-body system that is a two-field variant of the Heisenberg Hamiltonian  and a one-body chaotic model, the quantum kicked top.  
The two-field variant of the Heisenberg mode consists of $n$-spins arranged in a line with periodic boundary conditions:
\begin{equation}\label{eq:genheis}
H=\frac{1}{\|H\|}\! \! \left( \sum_{i\le n/2} \sigma_z^{(i)}\!+\!\! \! \sum_{i> n/2} \! \! \! \sigma_x^{(i)}\!+\! \sum_i \vec{\sigma}^{(i)}\cdot \vec{\sigma}^{(i+1)}\right) \! \! .
\end{equation}
We choose this Hamiltonian because it is highly structured local Hamiltonian that is not typical of RLH and yet it is unstructured enough to be non-integrable so there are no constants of motion that prevent equilibration on the full Hilbert space (otherwise ITE would be limited to the invariant subspaces fixed by the constants of motion). 
Figure~\ref{fig:vscaleH} shows that the outcome variance of the probability distribution indeed scales as $O(D^{-2})$ with respect to $A=\sum_{j=0}^{D-1} a_j \ketbra{j}{j}$, corresponding to read-out of all spins in the computational basis.  Hence Theorem~\ref{thm:distinguish} implies that information theoretic equilibration occurs for this simple many-body Hamiltonian with respect to a natural observable.  This is evidence that our equilibration mechanism is not just a mathematical feature of random Hamiltonian ensembles but occurs also in a simple, physically accessible many-body model.   
We also demonstrate ITE for the quantum kicked top in a regime of global chaos \change with respect to non-degenerate measurements in the $J_z$  basis (see supplemental material) and physically accessible times. 


\emph{Conclusion.} We have demonstrated a novel mechanism for equilibration that holds very broadly for the probability distributions of even maximally fine-grained measurements on pure quantum states of closed Hamiltonian systems.
Remarkably, this information theoretic equilibration is observed to hold without requiring any form of decoherence or restricting to local or otherwise coarse-grained measurements. This is because, in the typical case of a complex system, the dynamical pure-state quantum fluctuations, though finite, do \emph{not} lead to a \emph{breakdown of correspondence} with the equilibrium state (contrary to a common implicit assumption, see Refs.~\cite{HSZ98,Zurek98,PSW05}) because they become unobservably small under purely statistical considerations  (in the limit of large $D$) after the equilibration time-scale.
Our key insight is that although dynamical pure states of complex systems exhibit coherent fluctuations away from true micro-canonical equilibrium,  their detection in practice  requires extraordinary experimental resources, such as collecting  $O(D^{1/4})$ measurement outcomes from repetitions of the experiment, or pre--computation of the location of the dynamical state in a $D$-dimensional Hilbert space, or performing joint (entangling) measurements on identical copies of the system. In the absence of such resources, by Theorem~\ref{thm:distinguish} we see that after some finite time, the empirical probability distributions for dynamical pure states of complex quantum systems cannot be distinguished from the micro--canonical equilibrium state.

\appendix
\setcounter{lemma}{1}

\section{Proof of Theorem 1}\label{AppII}
\begin{proofof}{Theorem~1}
The first step of the proof is to demonstrate that if w1e take $N\ll \sqrt{M}$ samples from the uniform distribution then the probability of obtaining $N$ distinct
outcomes is nearly $1$.  Since there are $M!/(M-N)!$ ways $N$ unique items can be selected from a set of $M$ and since there are $M^N$ possible selections of items, we have that the probability of seeing no coincidences if the actual distribution is the uniform distribution is
\begin{equation}
\frac{M!}{(M-N)!M^N}= 1-\frac{N(N-1)}{2M}+O(1/M^2).
\end{equation}
This implies that coincidental outcomes are unlikely unless $N\ge O(\sqrt{M})$.

Now let us assume that the true probability distribution is not uniform, but rather a distribution with outcome variance $\var{\prob{k}}{k}=O(M^{-2})$.  We compare this to a distribution with
$\var{{\rm{Pr}}(k)}{k}=O(M^{-2})$ that has the highest probability of coincidence for some outputs.
Using the the definition of variance, $\var{\prob{k}}{k}=M^{-1}\sum_{k}(\prob{k}-M^{-1})^2$, expanding the sum and using $\sum_k \prob{k}=1$, we find
\begin{equation}
\var{\prob{k}}{k}=O(M^{-2})\ \Rightarrow {\prob{k}}\le O(M^{-1/2}), \ \forall~k.
\end{equation}

The probability that no coincidental measurements are observed after $N$ measurements is therefore at least
\begin{equation}
\prod_{j=1}^{N-1}\left(1-O\left(\frac{j}{M^{1/2}}\right)\right)=1-O\left(\frac{N^2}{\sqrt{M}}\right).
\end{equation}

  Let the event $\mathcal{M}$ denote the observation that $N$ unique measurement outcomes are observed, $\bar{\mathcal{M}}$ be the event where at least one measurement outcome is repeated,  $m_1$ be the model that prescribes a uniform probability distribution to the outcomes, $m_2$ be the model that the probability distribution has outcome variance $O(M^{-2})$, and $\mathcal{D}$ be the sequence of samples yielded by the device.  

\change Since the underlying distribution $P(k)$ is drawn from a distribution over distributions on $\mathcal{S}=\{1,\ldots,M\}$ that is invariant under permutations of $\mathcal{S}$, $\prob{\mathcal{D}|m_2}=\prob{\mathcal{P}(\mathcal{D})|m_2}$, where $\mathcal{P}$ is a permutation of $\mathcal{S}$.  We therefore see that all sequences of measurement outcomes that are equivalent up to permutations of labels provide equivalent evidence for model $m_2$.  Since $m_1$ is the uniform distribution, $\prob{\mathcal{D}|m_1}=\prob{\mathcal{P}(\mathcal{D})|m_1}$ for any permutation $\mathcal{P}$. It is then clear that the labels of
 the outcomes observed cannot be used to distinguish between model $m_1$ and $m_2$.  We therefore can, without loss of generality, choose the label of the outcomes such that the first outcome observed is outcome $1$, the second unique outcome observed is $2$ an so forth.  From this perspective, it is clear that the differences between both models only become apparent in the distribution of coincidental outcomes.  Our proof then follows from Bayes' theorem and by showing that the probability of a coincidental outcome is small unless $N\ge O(M^{1/4})$.  

Note that the preceding argument effectively prevents coarse graining from allowing us to distinguish the two distributions with
high probability using a small number of measurements because the probability of correctly guessing a coarse graining that assigns high probability to particular coarse--grained outcomes is $O(1/{\rm poly}(M))$ (given the assumption of permutation invariance).

\change There are two possible scenarios: either $\mathcal{D}$ does not contain any repeated sample labels or $\mathcal{D}$ contains at least one repeated sample label.  We will first assume that the entries of $\mathcal{D}$ are unique, which we denote event $\mathcal{M}$.   Bayes' Theorem then implies
\begin{equation}
\prob{m_1|\mathcal{M}}=\frac{\prob{\mathcal{M}|m_1}\prob{m_1}}{\prob{\mathcal{M}|m_1}\prob{m_1}+\prob{\mathcal{M}|m_2}\prob{m_2}}.\label{eq:bayes}
\end{equation}

From our previous discussion, we see that
\begin{align}
|1-\prob{\mathcal{M}|m_1}|&=O\left(\frac{N^2}{M}\right)\nonumber\\
|1-\prob{\mathcal{M}|m_2}|&\le O\left(\frac{N^2}{\sqrt{M}}\right).
\end{align}
These results, and the fact that $\prob{m_1}+\prob{m_2}=1$ give us 
\begin{equation}
\prob{m_1|\mathcal{M}}\le\frac{\left(1-O\left(\frac{N^2}{M}\right)\right)\prob{m_1}}{1-O\left(\frac{N^2}{\sqrt{M}}\right)}.\label{eq:a5}
\end{equation}
We apply Taylor's Theorem to the denominator of~\eqref{eq:a5} and find that 
\begin{equation}
|\prob{m_1|\mathcal{M}}-\prob{m_1}|\le O\left(\frac{N^2}{\sqrt{M}}\right),\label{eq:condmbound}
\end{equation}
and similarly
\begin{equation}
|\prob{m_2|\mathcal{M}}-\prob{m_2}|\le O\left(\frac{N^2}{\sqrt{M}}\right).
\end{equation}
This shows us that the support provided by event $\mathcal{M}$ for either hypothesis  is small unless $N\ge O(M^{1/4})$.

Our next step is to formally show that a typical data set $\mathcal{D}$ will be, with high probability, uninformative unless $N\ge O(M^{1/4})$.  This follows from a concentration of measure argument over $\mathcal{D}$ for the posterior probability distribution.
The average over $\mathcal{D}$ of the posterior probability is
\begin{equation}
\mathbb{E}_{\mathcal{D}}(\prob{m_1|\mathcal{D}})\!=\!\prob{m_1|\mathcal{M}}\!\prob{\mathcal{M}}\!+\!\prob{m_1|\bar{\mathcal{M}}}\!\prob{\bar{\mathcal{M}}}\!.\label{eq:Edbound}
\end{equation}
\change Using $\prob{m_1|\bar{\mathcal{M}}}\le 1$, $|1-\prob{\mathcal{M}}|\le O(N^2/\sqrt{M})$ and $\prob{\bar{\mathcal{M}}}\le O(N^2/\sqrt{M})$ we 
find from~\eqref{eq:condmbound} that	~\eqref{eq:Edbound} implies
\begin{equation}
|\mathbb{E}_{\mathcal{D}}(\prob{m_1|\mathcal{D}})-\prob{m_1}|\le O(N^2/\sqrt{M}).\label{eq:expectD}
\end{equation}
A similar calculation gives the variance as
\begin{align}
\mathbb{V}_{\mathcal{D}}(\prob{m_1|\mathcal{D}})&=\mathbb{E}_{\mathcal{D}}(\prob{m_1|\mathcal{D}}^2)\!-\!(\mathbb{E}_{\mathcal{D}}(\prob{m_1|\mathcal{D}}))^2\nonumber\\
&\le O(N^2/\sqrt{M}).
\end{align}
Therefore, Chebyshev's inequality implies that the probability of a given data set $\mathcal{D}$ deviating substantially from
the expectation value is
\begin{equation}
\prob{|\prob{m_1|\mathcal{D}}-\mathbb{E}_{\mathcal{D}}(\prob{m_1|\mathcal{D}})|\ge \epsilon}\le \frac{N^2}{\epsilon^2\sqrt{M}}.
\end{equation}
Using this result in concert with~\eqref{eq:expectD} implies that, with high probability, the posterior probability distribution after taking $N$ samples will obey
\begin{equation}
|\prob{m_1|\mathcal{D}}-\prob{m_1}|\le O\left(\frac{N}{M^{1/4}} \right).
\end{equation}
The result of Theorem~1 then follows from choosing the prior $\prob{m_1}=1/2$.

\end{proofof}

\section{Proof of Lemma~1}
\begin{proofof}{Lemma~1}
Eqns.~(3),~(4) and Chebyshev's inequality imply that in the limit of large $D$, the outcome variance of 
$\prob{k | \rho(t)}$ for $t\ge t_{\rm eq}$ is concentrated around $O(D^{-2})$.  In particular, for any $\epsilon>0$ we have from Chebyshev's inequality and~(4) that
\begin{align}
&{\underset{H \sim \mu_{D}}{\rm Pr}} \left( \left| \mathbb{V}_k\{\prob{k | \rho(t)}\} - \mathbb{E}_{\ensemble}\{\mathbb{V}_k\{\prob{k | \rho(t)}\} \} \right|  \geq \epsilon \right)\nonumber\\
 &\qquad\le\frac{ O(D^{-4})}{\epsilon^2},\label{eq:chebyprob}
\end{align}
where $\mu_D$ is the appropriate measure for the ensemble $\ensemble$.  In other words, the outcome variance for an individual system are very close to the ensemble average.

We can see that almost all Hamiltonians will have outcome variance $O(D^{-2})$ in the limit of large $D$ via the following argument.  To compute the probability that the outcome variance of a particular Hamiltonian scaling as $O(D^{-2+\gamma})$ for some $\gamma>0$, we set $\epsilon= O(D^{-2+\gamma})$.  Equation~\eqref{eq:chebyprob} then implies that the probability of such an event scales at most as $O(D^{-2\gamma})$, which vanishes in the limit of large $D$ unless $\gamma=0$.  Almost all Hamiltonians chosen from the ensemble therefore have outcome variance $O(D^{-2})$ in the limit of large $D$ if $t\ge t_{\rm eq}$.

\change Next we will show that this implies information theoretic equilibration with respect to the observable $A$ and time $t\ge t_{\rm eq}$.  We know that almost all Hamiltonians drawn from the ensemble $\ensemble$ will satisfy the requirements of Theorem 1.  Let us then consider a decision problem where we are maximally ignorant whether the state is a distribution that is uniform or one with outcome variance $O(D^{-2})$.  This corresponds to taking an equal a priori probability of $1/2$ for both outcomes.  The theorem then implies that the probability of distinguishing the measurement statistics from those that would be expected from the uniform distribution is at most $1/2+O(N/{D}^{1/4})$.  Therefore, $N=O(D^{1/4})$ samples are needed to distinguish the two possible models with probability greater than $1/2+\delta$ for any fixed $\delta>0$.  We then see from Definition 1 that almost all Hamiltonians drawn from this ensemble will equilibrate \change information theoretically with respect to $A$ and for any $t\ge t_{\rm eq}$ as $D\rightarrow \infty$.
\end{proofof}


\section{Unitary t-design Condition for Information Theoretic Equilibration}\label{t-design-argument}

We now discuss how the unitary, $C$, which transforms the eigenbasis of $H$ to that of the observable $A$ can be used to understand the equilibration properties of $H$.  Working in the eigenbasis of $A$,  we can write
$U(t) = C F(t) C^\dag$, where
$F(t)= \mbox{diag}[e^{-itE_a}]$, and $\{E_a\}_{a=1}^D$ are the energy eigenvalues of $H$.
Given an initial pure state $\rho(0)=\ketbra{x}{x}$, the measurement outcome probabilities can be written as
\beq
\prob{k|\rho(t)} =  \Tr{|k\rangle\!\langle k| C F(t)C^\dagger |x\rangle\! \langle x| C F(t)^\dagger C^\dagger }.\label{eq:probeq}
\eeq
Information theoretic equilibration \change follows from
$\var{ \prob{k|\rho(t)} }{k} \in O(D^{-2})$, which in turn requires that we know certain properties of $\prob{k | \rho(t)}^2$.
It is not difficult to see that $\prob{k|\rho(t)}$ and $\prob{k|\rho(t)}^2$ can be concisely represented by
\begin{align}
\prob{k|\rho(t)}&=\bra{L_2} C^{\otimes 2}\otimes {\bar{C}^{\otimes 2}}\ket{R_2(t)},\label{eq:22}\\
\prob{k|\rho(t)}^2&=\bra{L_4} C^{\otimes 4}\otimes {\bar{C}^{\otimes 4}}\ket{R_4(t)},\label{eq:44}
\end{align}
where $\bra{L_2}=\bra{k,x,x,k}$, $\bra{L_4}=\bra{k,x,k,x,x,k,x,k}$, and
\begin{align}
\ket{R_2(t)}&=\sum_{b,b'}e^{it(E_b-E_{b'})}\ket{b,b',b,b'},\\
\ket{R_4(t)}&=\sum_{b,b',d,d'}e^{it(E_b-E_{b'}+E_d-E_{d'})}\ket{b,b',d,d',b,b',d,d'}.
\end{align}
We refer to a term of the form $\bra{L_2} C^{\otimes 2} \otimes {\bar{C}^{\otimes 2}}\ket{b,b',b,b'}$ in the sum in~\eqref{eq:22} as a $(2,2)$--term because there are two basis change matrices acting on each factor space.  The analogous terms in~\eqref{eq:44} will be called $(4,4)$--terms; furthermore, $\mathbb{V}_{k}(\prob{k|\rho(t)})$ can be expressed as a $(4,4)$ polynomial (meaning that all terms in the expansion of the outcome variance are at most $(4,4)$ terms).  It can be easily checked using~(2) and Chebyshev's inequality that~3 and~(4) hold if the $(4,4)$-- and $(8,8)$--terms scale as $O(D^{-4})$ and $O(D^{-8})$ respectively.  \change This shows that we can reduce the question of whether typical Hamiltonians drawn from an ensemble equilibrates information theoretically with respect to an observable and time to a question about the properties of these terms.

The  scalings given in eqns.~3 and (4) are satisfied if the matrix elements of $C$, namely the Hamiltonian eigenvector components, satisfy a unitary $t$-design condition \cite{Dankertetal}, which means that these matrix elements  reproduce Haar-randomness for polynomials of degree at most $(t,t)$.
A similar connection was identified recently for subsystem equilibration in Refs.~\cite{Brandaoetal,MRA11,Znidaric11}, which required  a unitary $4$-design. 
\change For microcanonical equilibration, we must also ensure that $\mathbb{V}_{\ensemble}\{\mathbb{V}_k \prob{k|\rho(t)} \}$ is sufficiently small to imply a concentration of measure for $\mathbb{V}_{k}\{\prob{k|\rho(t)}\}$ via Chebyshev's inequality.  The resulting expression is an $(8,8)$--polynomial and hence a unitary $8$-design condition is sufficient to imply that  information theoretic equilibration with respect to local qubit measurements and $t\ge t_{\rm eq}$ is typical for individual systems from the ensemble (using  Theorem~1 and~Lemma~1).  

\section{Equilibration for Nearest Neighbor Hamiltonians}\label{AppNNH}
Previously, we showed numerically that random local Hamiltonians on a complete graph equilibrate information theoretically with respect to observables that are local rotations of $A=\sum_{j=0}^{D-1} a_j \ketbra{j}{j}$ for non--degenerate $a_j$ and $\ket{j}$ an eigenstate of $\sigma_z^{\otimes n}$.  Although some physical systems, such as the Bardeen--Cooper--Schrieffer Hamiltonian for low--temperature superconductivity~\cite{WBL02}, can be represented as random local Hamiltonians on a complete graph, many physically relevant Hamiltonians have interactions that are constrained to nearest neighbors.  We consider two relevant cases.  First, we consider random local Hamiltonians with nearest neighbor interactions on lines with periodic boundary conditions.  We then consider random local Hamiltonians on  square lattices with periodic boundary conditions.    In both cases, we see compelling numerical evidence for information theoretic equilibration with respect to $A=\sum_{j=0}^{D-1} a_j \ket{j}\!\bra{j}$ for non--degenerate $a_j$.

\emph{Random local Hamiltonians on a line}--  We will now consider Hamiltonians of the form
\begin{equation}\label{eq:1dHam}
H=\left(\sum_{i=1}^n\sum_{p} a_{i,p}\sigma_p^{(i)}+ \frac{1}{2}\sum_{\langle i,j\rangle}\sum_{p} b_{i,j,p,p'}\sigma_p^{(i)}\sigma_{p'}^{(j)}\right),
\end{equation}
where the sum over $\langle i, j\rangle$ refers to a sum over nearest neighbor $i$ and $j$ and $b_{i,j,p,p'}=b_{j,i,p',p}$.  In this case, such that only interactions between qubits $i$ and $j$ are permitted if $|i-j|=1$ or $|i-j|= n-1$.  Similarly to the RLH ensemble, we take $a_{i,p}$ and $b_{i,j,p,p'}$ to be Gaussian random variables with mean $0$ and variance $1$. 

Figure~\ref{fig:nnh} shows that typical members drawn from this constrained ensemble of random local Hamiltonians also achieve information theoretical equilibrium with respect to $A$.  We see from the results that the ensemble expectation of the outcome variance scales as $O(D^{-2})$ and that the ensemble variance of the outcome variance scales as $O(D^{-4.7})$.    We know from the results of Lemma~1 require that if the ensemble average and variance of $\mathbb{V}_k\{\prob{k|0;t} \}$ at most as $O(D^{-2})$ and $O(D^{-4})$ respectively in order to guarantee that a Hamiltonian sampled uniformly from the ensemble will, with high probability, equilibrate information theoretically with respect to the observable.  We therefore conclude from this data that information theoretic equilibration with respect to non--degenerate measurements in the eigenbasis of $\sigma_z^{\otimes n}$ is generic for members of this ensemble of random local Hamiltonians.

\begin{figure}[t!]
\includegraphics[width=\linewidth]{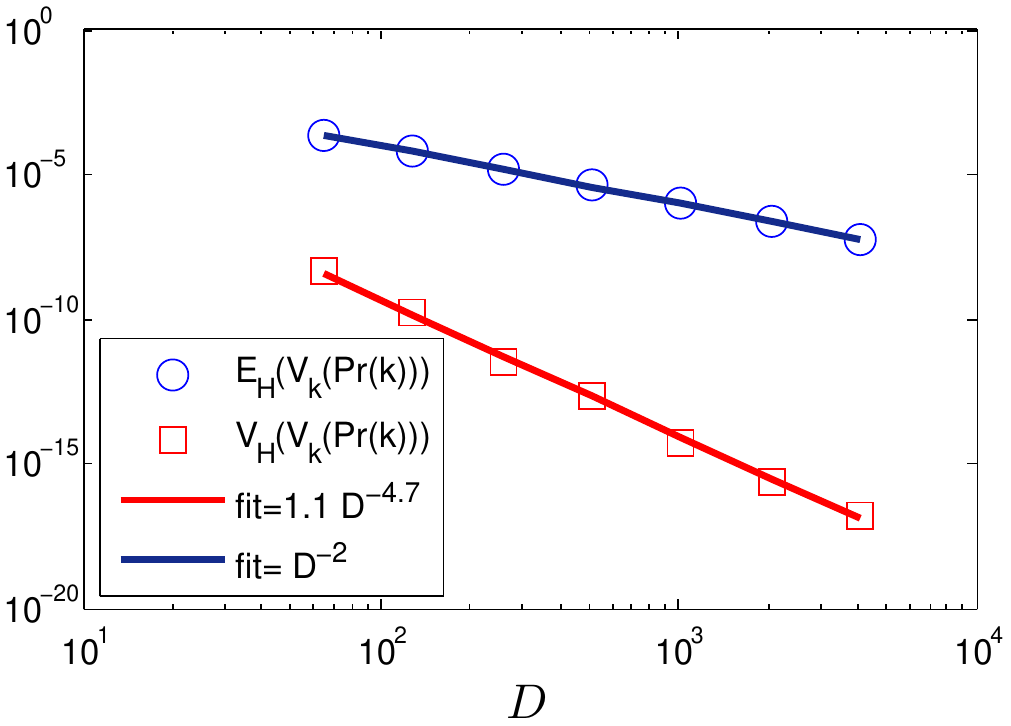}
\caption{This plot provides numerical estimates of the scaling of the ensemble average and variance of $\mathbb{V}_k\{\prob{k|0,t}\}$ for random local Hamiltonians with nearest neighbor interactions on a line with periodic boundary conditions and $t\ge t_{\rm eq}$.}\label{fig:nnh}
\end{figure}

\emph{Random local Hamiltonians on a square lattice}--  Next we examine the issue of whether typical members of the ensemble of random local Hamiltonians that are constrained to have only nearest neighbor interactions between qubits on a square lattice equilibrate information theoretically with respect to computational basis measurements.  The ensemble of Hamiltonians is similar to that in~\eqref{eq:1dHam} except now we permit two qubits to interact if the two qubits are adjacent vertices on a square lattice.  Note that we do not require that the overall shape of the lattice is a square.  Specifically, we consider lattices with a number of qubits $n=4,6,9,10,12,14$.  In the case of $n=4$ the lattice is uniquely a square of $2\times 2$ qubits.  In the case of $n=12$, there is an ambiguity in that the lattice can be expressed as an array of $4\times 3$ qubits or $2\times 6$ qubits.  We examine the former configuration because it is less like the $1$D case.

Figure~\ref{fig:nnh2} shows that Hamiltonians drawn uniformly from this ensemble of Hamiltonians constrained to nearest neighbor interactions on a square lattice, with high probability, equilibrate information theoretically with respect to computational basis measurements for exactly the same reasons as the one--dimensional case discussed above.  We also should note that although we have only studied equilibration with respect to a computational basis measurement, the results trivially also hold for local rotations of the computational basis because both the one-- and two--dimensional ensembles are invariant with respect to single qubit rotations of any and all qubits.

\begin{figure}[t!]
\includegraphics[width=0.945\linewidth]{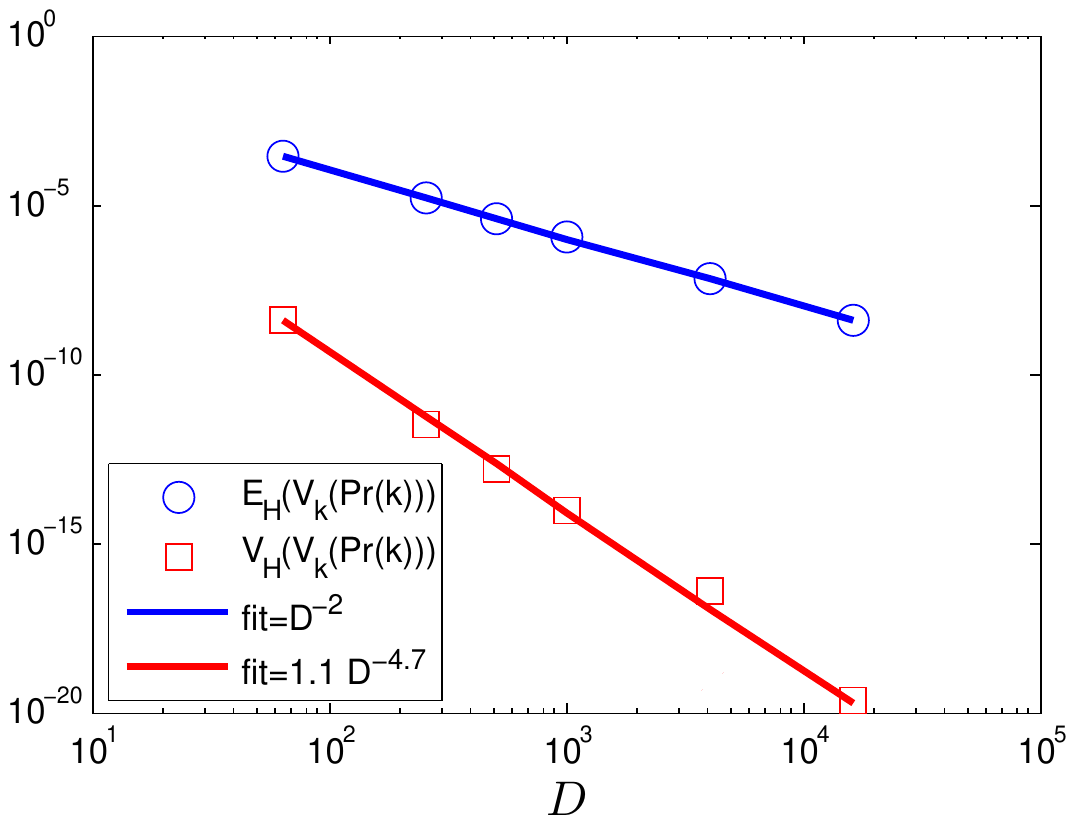}
\caption{This plot provides numerical estimates of the scaling of the ensemble average and variance of $\mathbb{V}_k\{\prob{k|0,t}\}$ for random local Hamiltonians with nearest neighbor interactions on a rectangular lattice with periodic boundary conditions and $t\ge t_{\rm eq}$.}  
\label{fig:nnh2}
\end{figure}

\section{Equilibration Time for a variant of the Heisenberg Model}
In the main body of the text, it was claimed that a variant Heisenberg model:
\begin{equation}\label{eq:genheis2}
H=\|H\|^{-1}\left(\sum_{i\le n/2} \sigma_z^{(i)}+\sum_{i> n/2} \sigma_x^{(i)}+ \sum_i \vec{\sigma}^{(i)}\cdot \vec{\sigma}^{(i+1)}\right).
\end{equation}
 has an equilibration time of $t_{\rm eq}\approx O(\log(D))$.  This fact can be seen in Figure~\ref{fig:teqH} where we plot the average probability, over $x$, of an initial eigenstate of $\ket{0}^{\otimes n}$ being measured in the state $\ket{x}\ne \ket{0}^{\otimes n}$ after evolution under the Hamiltonian for time $t$.  We see from the figure strong evidence for equilibration of the measurement outcome.  

The equilibration times do not scale as smoothly with $D$ as the data considered for the RLH ensemble.  The reason for this discrepancy is that the ratio of spins experiencing a magnetic field in the $X$ direction to those experiencing a field in the $Z$ direction varies with $n$.  If $n$ is even, then the ratio will be $1:1$; however, if $n$ is odd then there will be an excess of spins experiencing a transverse field in the $X$ direction.  This difference causes the equilibration times to vary with the parity of $n$.  We do see evidence though in Figure~\ref{fig:teqH} that $t_{\rm eq} \approx O(\log(D))$; although given the small range for the fit, the precise functional dependence of $t_{\rm eq}$  on $D$ is not certain.
\begin{figure}[t!]
\includegraphics[width=\linewidth]{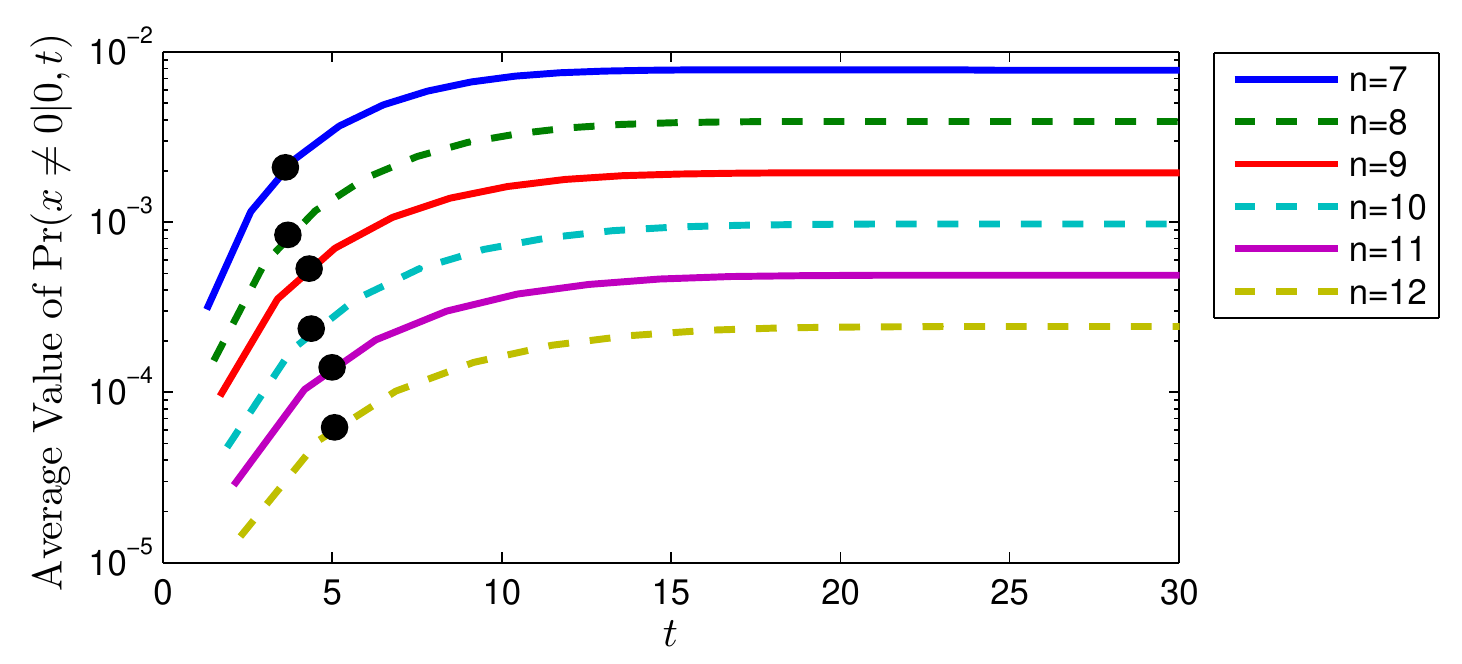}
\caption{This plot shows $\prob{k\ne 0|0,t}$ for the Hamiltonian given in~\eqref{eq:genheis2}.  The circles show $t_{\rm eq}$ for each $n$, which scale roughly as $O({\log(D)})$.}  
\label{fig:teqH}
\end{figure}

\section{Equilibration for a One--Body Quantum Chaotic System}

Finally we show that information theoretic equilibration of pure states occurs also for an individual system consisting of a one-body dynamical model associated with global classical chaos. In particular we consider a
 variant of the quantum kicked top,  described by the Floquet map~\cite{Haa06}:
\begin{equation}
U_F=e^{-i\bm{J}\bm{\tau}\bm{J}^T}e^{-i \mathbf{J}\cdot \mathbf{\alpha}},
\end{equation}
where $\mathbf{J}=[J_x, J_y, J_z]$ is a vector of angular momentum operators, $\tau$ is the moment of inertia tensor, which is diagonal with entries $[\tau_{xx},\tau_{yy},\tau_{zz}]$, $\bf{\alpha}=[\alpha_x,\alpha_y,\alpha_z]$ is a vector of kick--strengths and $\hbar=1$.  The dimension of the system is $D\!=\!2j\!+\!1$ where $j$ is the total angular momentum quantum number.
Fig.~\ref{fig:qkt} shows that for a set of chaotic parameters for $\lambda$ and $\tau$, roughly $10$ kicks (applications of $U_F$) are needed in order for $\var{\prob{k}}{k}\!=\!O(D^{-2})$, and hence for Theorem~1 to apply.

\begin{figure}[t!]
\includegraphics[width=\linewidth]{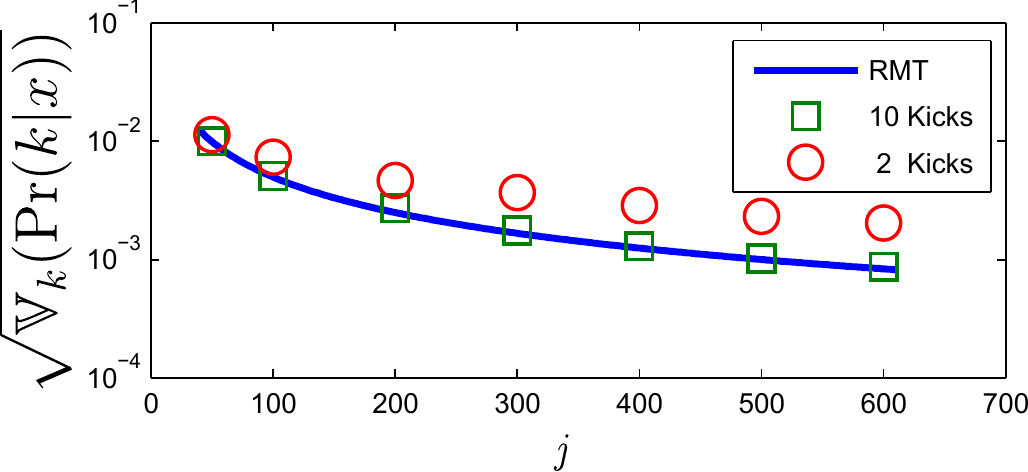}
\caption{The measurement outcome variance $\var{\prob{k|x}}{k}$ plotted as a function of system size,  both before ($2$ kicks) and after ($10$ kicks) the equilibration time.  For $10$ kicks, we see excellent agreement with the GUE prediction of $O(D^{-2})$, confirming information theoretic equilibration of a one-body system in the macroscopic limit in a regime of global chaos ($\bm{\alpha}=[1.1,1.0,1.0]$ and $\bm{\tau}={\rm diag}(10,0,1)$).\label{fig:qkt}}
\end{figure}

\section{Inferring Equilibration of RLH from $(4,4)$ Terms}
We showed previously that if the Hamiltonian is sufficiently complex, meaning that the change of basis matrix $C$ satisfies a condition similar to an $8$--design condition, then equilibration theoretic equilibration is generic for Hamiltonians drawn uniformly from the ensemble.  Here we use this insight to infer from the properties of the $(4,4)$ terms of RLH Hamiltonians that information theoretic equilibration is generic.

In order to show this, we need to estimate the ensemble means and variances of the dominant $(4,4)$ terms (in the limit of large $D$).  If the initial state preparation is $\ket{x}$ and measurement outcome $k$ is considered then the relevant sum in computing the value of $\prob{k|\rho(t)}$ (which is needed in the computation of $\mathbb{V}_k\{\prob{k|\rho(t)} \}$) is
\begin{widetext}
\begin{equation}
\prob{k|\rho(t)}^2=\sum_{b,b',d,d'}\bra{k,x,k,x,x,k,x,k}C^{\otimes 4}\otimes \bar{C}^{\otimes 4}{e^{it(E_b-E_{b'}+E_d-E_{d'})}}\ket{b,b',d,d',b,b',d,d'}.
\end{equation}
\end{widetext}
Here each summand is known as a $(4,4)$ term.  It is easy to then see that in the limit of large $t$ and under the assumption of non--degenerate Hamiltonians, that terms with $b=b'$ and $d=d'$ will dominate other terms due to the fact that no phase cancellation appears in the sum over such terms.  This means that the value of the outcome variance is dictated by the characteristic magnitude of such terms.  In this case, we do not need to compute the $(2,2)$--terms  because $\mathbb{E}_{\ensemble}(\prob{k|\rho(t)})=1/D$ trivially holds for RLH Hamiltonians.

\begin{figure}[t!]
\includegraphics[width=\linewidth]{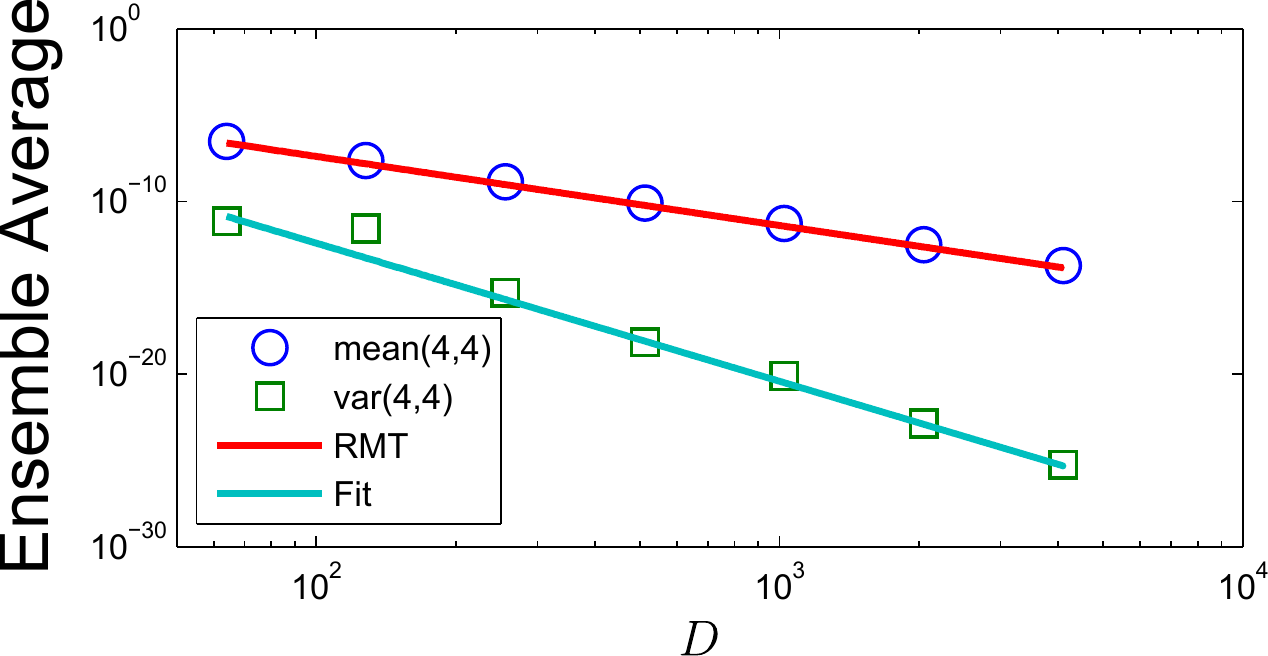}
\caption{Numerically computed expectation values over $k$ and $b$ for $x=0$ of the RLH ensemble average and variances of the $(4,4)$--terms that are dominant for large $t$, $\bra{L_4}C^{\otimes 4}\otimes \bar{C}^{\otimes 4}\ket{b,b,b,b,b,b,b,b}$, plotted as a function of $D=2^n$, and compared
to the scalings for GUE.  The variance of the $(4,4)$ terms scale as $3790/D^8$ for $D> 128$.  The results do not qualitatively change if $k$ and $b$ are fixed or if $b'\ne b$.\label{fig:RLH2}}
\end{figure}

Fig.~\ref{fig:RLH2} shows that the RLH average of the $(4,4)$--terms agrees with the GUE predictions of $O(D^{-4})$ scaling.   We also find that the variance of the $(4,4)$ terms is $O(D^{-8})$ which suggests that concentration of measure holds for the ensemble.  This shows that the equilibriation properties observed for the RLH ensemble can also be inferred from the properties of the change of basis matrix $C$.

\section{Proof of Theorem 2}\label{AppI}

We prove Theorem~2 in two steps:
\begin{enumerate}
\item We derive expressions for the Haar expectations of $\prob{k | \rho(t)}$ (see \eqref{prob defn} below) and the second and fourth moments.
    See Lemmas~\ref{C var}, \ref{C avg}, and~\ref{C 4th}.
\item We find the expectations of these expressions over the GUE eigenvalue distribution, and show that there exists a time $t_{\rm eq}(D)=O(D^{-1/6})$ such that for all $t> t_{\rm eq}(D)$, eqns. (3) and (4) hold.  See Section \ref{Appendix GUE} for these expectations.
\end{enumerate}
Since the calculations in this subsection require in depth knowledge of the properties of the Gaussian Unitary Ensemble (GUE), we will begin by giving a brief review of its properties \cite{Haa06,Meh04,Bal68}. GUE is the unique unitarily invariant distribution over Hermitian matrices $H$ that factorizes into a product of distributions each over an individual element of $H$.  In particular, each indepedent element of $H$ is an i.i.d.~Gaussian random variable.  A Hermitian matrix $H$ generated according to the GUE has diagonal elements $h_{aa}$ that are real valued random variables each with distribution $\mathcal{N}(0,\sigma^2)$, and off-diagonal elements $h_{ab}$ with real and imaginary parts that are random variables each with distribution $\mathcal{N}(0,\frac{1}{2} \sigma^2)$.
The variance $\sigma^2$ is a free parameter, which is closely related to the expected maximum energy eigenvalue as well as the ensemble average energy level spacing.

While many aspects of GUE have been shown to accurately model complex quantum systems, there are known limitations to using GUE as a model of such systems which deserve mention before we proceed.
First, the average level density, which takes the form of a semi-circle, and long-range spectral fluctuations for GUE are not good models for the corresponding properties of physically relevant Hamiltonian systems, even chaotic ones.  In particular, for most natural systems, such as those with only two particle interactions, the norm of the Hamiltonian scales polynomially with the number of particles \cite{LPS+09}.  However, the expected norm of a GUE Hamiltonian scales polynomially in the Hilbert space dimension $D$ \cite{Haa06}.
As we will see below, this has a large impact on what might be called the equilibration time for these dynamical systems, which should therefore be taken with a grain of salt.  Because of this, and in order to simplify calculations, we will follow the standard practice \cite{MRA11,Meh04} of taking the variance
$\sigma^2=\frac{1}{2}$.

The joint distribution over all elements of $H$ factorizes into a product of a distribution over eigenvectors, and a distribution over the energy eigenvalues of $H$ (see \cite{Meh04} Theorem 3.3.1, or \cite{Haa06} Chapter 4).  Further, the joint probability distribution over eigenvectors of $H$ is the same as that over the change of basis matrix $C$, namely the Haar measure on the unitary group $\mathbb{U}(D)$ \cite{Haa06,Meh04}.
Letting $C$ be the unitary which takes the eigenbasis of $H$ to that of $A$, and working in the eigenbasis of $A$, we can write
\beq
U(t) = C F(t) C^\dag,
\eeq
 where
$F(t):= \mbox{diag}[e^{-itE_a}]$, and $\{E_a\}_{a=1}^D$ are the energy eigenvalues of $H$.
We can therefore take separate expectations over eigenvectors and eigenvalues, namely, over the matrices $C$ and $F$.  $\prob{k | \rho(t)}$ can be expressed as
\beq\label{prob defn}
\prob{k | \rho(t)} =  \Tr{ |k\rangle \langle k| C F(t)C^\dagger |x\rangle \langle x| C F(t)^\dagger C^\dagger}.
\eeq
In the following we will write the expectation over the Haar measure on the unitary group $\mathbb{U}(D)$ of change of basis matrices $C$, as $\expect{ \cdot }{C}$, and $\expect{ \cdot }{\rm spec}$ for the expectation over the GUE eigenvalue distribution.

\subsection{Expectations over eigenvectors}\label{eigenvect exp}

\begin{lemma}\label{C var}
Take a non-degenerate observable $A$ acting on $\mathcal{H}=\mathbb{C}^D$, an initial pure state $\rho(0) = \ket{x}\!\bra{x}$ which is an eigenstate of $A$, and a unitary $U(t) = e^{-itH}$ where $H$ is drawn uniformly at random from GUE.
Then the variance of the measurement outcome probabilities over the Haar measure on the unitary group $\mathbb{U}(D)$ of change of basis matrices $C$ is given by:
\begin{align}
&\var { \prob{k | \rho(t)} }{C} \nonumber \\
&\qquad=
  \frac{1}{D^4}
\Big\{ D^2-2D+4 + (7-2D)\delta_{xk} \nonumber \\
&\qquad+ |\mu (t)|^2 (2 D\delta _{xk}-10 \delta _{xk}-2)  \nonumber\\
&\qquad+ \delta_{xk}|\mu (2 t)|^2 +2\delta_{xk} Re[\mu (t)^2 \mu (-2 t)] \Big\}+O(D^{-5}),
\end{align}
where we have defined $\mu(t)= \Tr{U(t)} = \Tr{F(t)}$
(this is often called the spectral form factor \cite{Haa06,Meh04}).
\end{lemma}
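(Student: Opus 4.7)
The strategy is to separate the eigenvector and eigenvalue degrees of freedom of $H$ and to carry out the Haar integration over $C\in\mathbb{U}(D)$ by the Weingarten calculus, holding the spectrum (and hence the spectral form factors $\mu(\cdot)$) fixed. Expanding in matrix elements,
\[
\prob{k|\rho(t)} = \sum_{a,b} e^{-i(E_a - E_b)t}\, C_{ka}\bar C_{xa}\bar C_{kb} C_{xb},
\]
makes the probability a degree-$(2,2)$ polynomial in $(C,\bar C)$, and squaring yields a degree-$(4,4)$ polynomial with phase $e^{-it(E_a-E_b+E_c-E_d)}$. The variance $\var{\prob{k|\rho(t)}}{C}$ is then the $(4,4)$ Haar integral minus the square of the $(2,2)$ Haar integral.

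The key tool is the Weingarten formula,
\[
\expect{\prod_{l=1}^n C_{i_l j_l}\bar C_{i'_l j'_l}}{C} = \sum_{\sigma,\tau\in S_n}\prod_l \delta_{i_l i'_{\sigma(l)}}\delta_{j_l j'_{\tau(l)}}\,\mathrm{Wg}(\tau\sigma^{-1},D),
\]
applied with $n=2$ for the mean and $n=4$ for the second moment. For $n=4$, the $C$'s carry row labels $(k,x,k,x)$ and the $\bar C$'s carry $(x,k,x,k)$, so the row Kronecker deltas force each admissible $\sigma$ to map the two positions of row $k$ into those of row $x$ on the primed side. This leaves $2!\cdot 2! = 4$ admissible $\sigma$'s when $x\neq k$, and all $4! = 24$ admissible when $x=k$: this is precisely the origin of the $\delta_{xk}$-dependent terms in the lemma. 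All $\tau\in S_4$ contribute, since the column indices are shared dummy spectral labels $a,b,c,d$; each $\tau$ simply prescribes which spectral labels are identified before summation.

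Once the $(\sigma,\tau)$ pair is chosen, the spectral sum collapses to a monomial in the traces $\mu(0)=D$, $\mu(\pm t)$, and $\mu(\pm 2t)$, determined by the cycle structure of $\tau$ acting on the column labels: identifications like $a=b$ and $c=d$ give $|\mu(0)|^2$, $a=c$ and $b=d$ give $|\mu(2t)|^2$, $a=b=c$ with $d$ free gives $D\,\mu(t)\mu(-t)$, the configuration $a=c=d$ with $b$ free gives $\mu(t)^2\mu(-2t)$, and so forth. Each such contribution is weighted by $\mathrm{Wg}(\tau\sigma^{-1},D)$ whose large-$D$ expansion is standard: the identity permutation contributes at order $D^{-4}$, single transpositions at order $D^{-5}$, and higher cycles at successively higher orders. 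Summing the $\sim 96$ admissible contributions (more in the $x=k$ branch), expanding each Weingarten value in $1/D$, and finally subtracting the square of the mean, computed analogously from a single $S_2$ Weingarten integral, gives the stated closed form up to $O(D^{-5})$.

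\textbf{Main obstacle.} The calculation is conceptually routine but combinatorially dense: the bulk of the work is the careful bookkeeping across the Weingarten terms — tracking, for each $\tau$, which column identifications are enforced, which monomial in $\mu(\cdot)$ results, and which order in $1/D$ the associated $\mathrm{Wg}$ contributes. The two branches $x = k$ and $x\neq k$ must be carried through in parallel, since additional $\sigma$'s become admissible when $x=k$ and new $\mu$-monomials survive. Finally, delicate cancellations at orders $D^{-2}$ and $D^{-3}$ between the second moment and the squared mean must be handled carefully in order to isolate the correct leading variance and to confirm that the residual error is genuinely $O(D^{-5})$.
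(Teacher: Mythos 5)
Your plan is correct and is essentially the same route the paper takes: the paper also performs a Haar average of $C^{\otimes 4}\otimes\bar C^{\otimes 4}$, writing it as a sum over $S_4\times S_4$ weighted by the inverse Gram matrix $M^{-1}_{\pi,\sigma}$ with $M_{\pi,\sigma}=D^{l(\pi^{-1}\sigma)}$ (which is precisely the Weingarten function), then contracts against $\bra{L_4}$ and $\ket{R_4(t)}$, identifies which of the $4!$ permutations survive only in the $x=k$ branch, reads off the resulting $\mu$-monomials from the cycle structure, expands in $1/D$, and subtracts the square of the mean from Lemma 3. Your index-level Weingarten phrasing and the paper's vector/inner-product phrasing are two notations for the same computation.
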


\begin{proof}
First, note that the squares of the outcome probabilities $\prob{k | \rho(t)}$ can be written in the form:
\begin{align}
& \Tr { |k\rangle \langle k| C F(t)C^\dagger |x\rangle \langle x| C F(t)^\dagger C^\dagger }^2    \nonumber \\
&\qquad= \sum_{s,s'} \langle s| k\rangle \langle k| C F(t) C^\dag | x\rangle \langle x| C F(t)^\dag C^\dag
|s \rangle   \nonumber \\
& ~~\qquad \times \ \langle s'| k \rangle \langle k | C F(t) C^\dag | x\rangle \langle x| C F(t)^\dag C^\dag    |s' \rangle \nonumber \\
&\qquad= \langle L_4 | C^{\otimes 4} \otimes \bar{C}^{\otimes 4}|R_4(t) \rangle,
\end{align}
where $\bar{C}$ is the complex conjugate of $C$ and
\begin{align}
\langle L_4 | &= \langle k,x,k,x, {x}, {k}, {x}, {k} |,  \nonumber\\
|R_4(t) \rangle &= \sum_{b,b',d,d'} e^{it(E_b -E_{b'}+ E_d-E_{d'})}  \nonumber \\
 &\times |b, b', d, d', {b},  {b'}, {d}, {d'} \rangle,\label{L and R def}
\end{align}
The expectation of the expression
$ C^{\otimes 4} \otimes \bar{C}^{\otimes 4} $ over Haar measure can be written as the projector onto the subspace spanned by the vectors
\beq
\ket{\Phi_\pi} = \left(V_\pi \otimes \id\right) \ket{\phi}_{1,5}\ket{\phi}_{2,6}\ket{\phi}_{3,7}\ket{\phi}_{4,8},
\eeq
where $\ket{\phi}_{ij}=  \sum_{a=1}^D \ket{a}_i\ket{a}_j$, and the index $\pi$ runs over the $4!$ permutations of the elements $\{1,2,3,4\}$, and $V_\pi$ is the unitary permutting the first four factor spaces according to $\pi$.
It was shown in \cite{Brandaoetal} that this projector is given by
\begin{align}
\expect{C^{\otimes 4} \otimes \bar{C}^{\otimes 4} }{C} = \sum_{\pi,\sigma} ( M^{-1})_{\pi,\sigma} \ket{\Phi_\pi}\!\bra{\Phi_\sigma},
\end{align}
where the matrix $M$ has components
$M_{\pi,\sigma} = \langle V_\pi | V_\sigma \rangle = \Tr {V_{\pi^{-1}} V_\sigma } = d^{l(\pi^{-1} \sigma)}$,
and $l(\sigma)$ is number of cycles in the cycle decomposition of the permutation $\pi^{-1} \sigma$.
We then have
\begin{align}\label{SumPermutationsM}
\expect{ \prob{k | \rho(t)}^2  }{C} =\sum_{\pi,\sigma} \langle L_4   \ket{\Phi_\pi}( M^{-1})_{\pi,\sigma} \bra{\Phi_\sigma} R_4(t) \rangle,
\end{align}
where the inner products $\bra{\Phi_\sigma} R_4(t) \rangle$ are given by:
\beqa
\bra{\Phi_{(1, 2, 3, 4)}} R_4(t) \rangle &=& |\mu(t)|^4, \nonumber \\
\bra{\Phi_{(1, 2, 4, 3)}} R_4(t) \rangle=\bra{\Phi_{(1, 3, 2, 4)}} R_4(t) \rangle &=& d|\mu(t)|^2, \nonumber\\
\bra{\Phi_{(2, 1, 3, 4)}} R_4(t) \rangle =
\bra{\Phi_{(4, 2, 3, 1)}} R_4(t) \rangle &=& d|\mu(t)|^2, \nonumber\\
\bra{\Phi_{(1, 3, 4, 2)}} R_4(t) \rangle=\bra{\Phi_{(1, 4, 2, 3)}} R_4(t) \rangle &=& |\mu(t)|^2 , \nonumber\\
\bra{\Phi_{(2, 3, 1, 4)}} R_4(t) \rangle= \bra{\Phi_{(2, 4, 3, 1)}} R_4(t) \rangle &=& |\mu(t)|^2 , \nonumber\\
\bra{\Phi_{(3, 1, 2, 4)}} R_4(t) \rangle= \bra{\Phi_{(3, 2, 4, 1)}} R_4(t) \rangle &=& |\mu(t)|^2 , \nonumber\\
 \bra{\Phi_{(4, 1, 3, 2)}} R_4(t) \rangle= \bra{\Phi_{(4, 2, 1, 3)}} R_4(t) \rangle &=& |\mu(t)|^2 , \nonumber\\
\bra{\Phi_{(3, 4, 1, 2)}} R_4(t) \rangle &=& |\mu(2t)|^2 ,\nonumber\\
\bra{\Phi_{(1, 4, 3, 2)}} R_4(t) \rangle &=& \mu(t)^2 \bar{\mu}(2t), \nonumber\\
\bra{\Phi_{(3, 2, 1, 4)}} R_4(t) \rangle &=& \mu(2t) \bar{\mu}(t)^2 , \nonumber\\
\bra{\Phi_{(2, 1, 4, 3)}} R_4(t) \rangle= \bra{\Phi_{(4, 3, 2, 1)}} R_4(t) \rangle &=& d^2 , \nonumber\\
\bra{\Phi_{(2, 3, 4, 1)}} R_4(t) \rangle= \bra{\Phi_{(3, 4, 2, 1)}} R_4(t) \rangle &=& d, \nonumber\\
\bra{\Phi_{(2, 4, 1, 3)}} R_4(t) \rangle= \bra{\Phi_{(3, 1, 4, 2)}} R_4(t) \rangle &=& d, \nonumber\\
\bra{\Phi_{(4, 1, 2, 3)}} R_4(t) \rangle= \bra{\Phi_{(4, 3, 1, 2)}} R_4(t) \rangle &=& d, \nonumber
\eeqa
Further, $\langle L_4   \ket{\Phi_{\pi}} = 1$ for
$\pi = (4, 3, 2, 1)$, $(4, 1, 2, 3)$, $(2, 3, 4, 1)$, $(2, 1, 4, 3)$, and $\langle L_4 \ket{\Phi_{\pi}}= \delta_{ik}$ for all other $\pi$.

Taking the sum of the above terms with $M^{-1}$ in
eqn. \eqref{SumPermutationsM}, we find:
\begin{align}
  &\expect{ \prob{k | \rho(t)}^2 }{C}  \nonumber \\
  &= \frac{1}{\alpha}
  \Big\{ |\mu(t)|^4 \left( (D^2 -D -2) \delta_{xk}+2 \right) \nonumber \\
 &+ |\mu(t)|^2 \left( -4D^2\!-\!12D\!-8\!+ (4D^3\!+\!8D^2\!+\!4D+8) \delta_{xk} \right) \nonumber \\
 &+ (D^2 -D -2) \delta_{xk}\left(|\mu (2 t)|^2+
 \bar{\mu}(2t)\mu(t)^2+ \mu(2t) \bar{\mu}(t)^2 \right) \nonumber\\
 &+2D^4+8D^3+6D^2 -(4D^3+12D^2)\delta_{xk}  \nonumber\\
 &+2|\mu(2t)|^2
 +2\bar{\mu}(2t) \mu(t)^2+2\mu(2t)\bar{\mu}(t)^2
    \Big\},
\end{align}
where $\alpha = { D^2 (D-1)(D+1) (D+2) (D+3)}$.
Using $\expect{ \prob{k | \rho(t)} }{C}$ from Lemma \ref{C avg},
we then have
\begin{align}
&\var{ \prob{k | \rho(t)} }{C} \nonumber  \\
&=\frac{1}{D^4}
\Big\{ D^2-2D+4 + (7-2D)\delta_{xk}\nonumber \\
&     + |\mu (t)|^2 (2 D\delta_{xk}-10 \delta_{xk}-2) \nonumber \\
&     +\delta_{xk}|\mu(2t)|^2 +2\delta_{xk} Re[\mu(t)^2 \mu (-2 t)] \Big\}+O(D^{-5}).
\end{align}
\end{proof}

\begin{lemma}\label{C avg}
Take a non-degenerate observable $A$ acting on $\mathcal{H}=\mathbb{C}^D$, an initial pure state $\rho(0) = \ket{x}\!\bra{x}$ which is an eigenstate of $A$, and a unitary $U(t) = e^{-itH}$ where $H$ is drawn uniformly at random from GUE.
Then the expectation of the measurement outcome probabilities for $C$ distributed according to the Haar measure on the unitary group $\mathbb{U}(D)$ is given by:
\beq
 \expect { \prob{k |\rho(t)} }{C}
=
\frac{D - \delta_{xk} +  |\mu(t)|^2 \left( \delta_{xk}- \frac{1}{D} \right)}{D^2-1}.
\eeq
\end{lemma}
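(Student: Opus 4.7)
The plan is to mimic exactly the strategy already used for Lemma~\ref{C var}, but with two copies of $C$ and $\bar C$ rather than four. Using the $(2,2)$ representation in eq.~\eqref{eq:22}, one writes
\begin{equation*}
\prob{k|\rho(t)} = \bra{L_2}\, C^{\otimes 2}\otimes \bar C^{\otimes 2}\,\ket{R_2(t)},
\end{equation*}
with $\bra{L_2}=\bra{k,x,x,k}$ and $\ket{R_2(t)}=\sum_{b,b'}e^{it(E_b-E_{b'})}\ket{b,b',b,b'}$, so that $\expect{\prob{k|\rho(t)}}{C}$ reduces to evaluating the Haar average $\expect{C^{\otimes 2}\otimes \bar C^{\otimes 2}}{C}$ sandwiched between these two explicit vectors.

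Next I would apply the same Weingarten identity quoted in the proof of Lemma~\ref{C var}, which expresses that Haar average as $\sum_{\pi,\sigma}(M^{-1})_{\pi\sigma}\ket{\Phi_\pi}\!\bra{\Phi_\sigma}$ with $\pi,\sigma$ now running over the two-element permutation group $S_2$ and $M_{\pi\sigma}=D^{l(\pi^{-1}\sigma)}$. Inverting the resulting $2\times 2$ Gram matrix is immediate:
\begin{equation*}
M=\begin{pmatrix} D^2 & D \\ D & D^2\end{pmatrix},\qquad M^{-1}=\frac{1}{D^2(D^2-1)}\begin{pmatrix} D^2 & -D \\ -D & D^2\end{pmatrix}.
\end{equation*}

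The only step that requires care is the bookkeeping of the four contractions $\braket{L_2}{\Phi_\pi}$ and $\braket{\Phi_\sigma}{R_2(t)}$. By direct bra--ket manipulation the left inner products evaluate to $\delta_{xk}$ for $\pi=e$ and to $1$ for $\pi=(12)$, while the right inner products give $|\mu(t)|^2$ for $\sigma=e$ and $D$ for $\sigma=(12)$, the latter because the swap constraint collapses the double sum $\sum_{b,b'}$ onto the diagonal $b=b'$. Substituting these four numbers together with $M^{-1}$ and simplifying the arithmetic reproduces the claimed identity. This is strictly easier than the analogous computation in Lemma~\ref{C var}, where one must handle the $24\times 24$ entries coming from $S_4\times S_4$; no new conceptual ingredient beyond what already appears in the paper is required.
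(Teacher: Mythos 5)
Your proposal is correct and is exactly the argument the paper has in mind: the paper explicitly defers the proof of this lemma by saying it follows "in a similar but simpler fashion as the variance in the previous lemma," and your degree-2 Weingarten calculation is precisely that simplification, with the correct $2\times 2$ Gram matrix $M$, its inverse, and the four contractions $\langle L_2|\Phi_\pi\rangle\in\{\delta_{xk},1\}$ and $\langle\Phi_\sigma|R_2(t)\rangle\in\{|\mu(t)|^2,D\}$. Carrying out the final arithmetic indeed recovers $\frac{D-\delta_{xk}+|\mu(t)|^2(\delta_{xk}-1/D)}{D^2-1}$, so no gap remains.
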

\proof
This expectation can be calculated in a similar but simpler fashion as the variance in the previous lemma, so we leave the proof as an exercise.
\qed

 \begin{lemma}\label{C 4th}
Take a non-degenerate observable $A$ acting on $\mathcal{H}=\mathbb{C}^D$, an initial pure state $\rho(0) = \ketbra{x}{x}$ which is an eigenstate of $A$, and a unitary $U(t) = e^{-itH}$ where $H$ is drawn uniformly at random from the GUE.
Then the fourth moment of the measurement outcome probabilities over the Haar measure on the unitary group $\mathbb{U}(D)$ of change of basis matrices $C$ is given by:
\begin{align}
&\expect{\prob{k | \rho(t)}^4 }{C} \nonumber   \\
&\qquad\leq \frac{1}{D^8} \sum_{\pi}    \bra{\Phi_\pi} R_8(t) \rangle +
\sum_{\pi, \sigma}   B_{\pi,\sigma} \bra{\Phi_\sigma} R_8(t) \rangle,
\end{align}
where $B$ is a matrix with components $\leq O(D^{-9})$.
\end{lemma}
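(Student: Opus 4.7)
The plan is to parallel the derivation of Lemma~\ref{C var}, raising the tensor power from four to eight. Since $\prob{k|\rho(t)}=|\bra{k}CF(t)C^\dagger\ket{x}|^2$ is already a polynomial of bidegree $(2,2)$ in the entries of $C$ and $\bar{C}$, its fourth power is a bidegree $(8,8)$ polynomial and can be written compactly as
\begin{equation}
\prob{k|\rho(t)}^4 = \bra{L_8}\,C^{\otimes 8}\otimes \bar{C}^{\otimes 8}\,\ket{R_8(t)},
\end{equation}
where $\bra{L_8}$ is the tensor product of eight $\bra{k}$'s and eight $\bra{x}$'s obtained by replicating the alternating $\bra{L_4}$ pattern of Lemma~\ref{C var} once per factor of $\prob{k|\rho(t)}$, and $\ket{R_8(t)}$ is the obvious eight-pair generalization of $\ket{R_4(t)}$ carrying the product of eight spectral phase factors. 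By exactly the counting-of-deltas argument that gave $\bra{L_4}\Phi_\pi\rangle\in\{0,1\}$, each overlap $\bra{L_8}\Phi_\pi\rangle$ again belongs to $\{0,1\}$ for every $\pi\in S_8$.

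Next I would invoke the standard Haar integration formula,
\begin{equation}
\expect{C^{\otimes 8}\otimes \bar{C}^{\otimes 8}}{C}=\sum_{\pi,\sigma\in S_8}(M^{-1})_{\pi,\sigma}\,\ketbra{\Phi_\pi}{\Phi_\sigma},
\end{equation}
with Gram matrix $M_{\pi,\sigma}=D^{l(\pi^{-1}\sigma)}$, just as in the derivation of Lemma~\ref{C var}. The analytic crux of the proof is then the standard Weingarten asymptotic
\begin{equation}
(M^{-1})_{\pi,\sigma}=D^{-8}\,\delta_{\pi,\sigma}+O(D^{-9}),
\end{equation}
which isolates a leading diagonal of scale $D^{-8}$ and places every other entry of $M^{-1}$ at least one power of $D^{-1}$ below it. Substituting this decomposition into $\expect{\prob{k|\rho(t)}^4}{C}=\sum_{\pi,\sigma}\bra{L_8}\Phi_\pi\rangle\,(M^{-1})_{\pi,\sigma}\,\bra{\Phi_\sigma}R_8(t)\rangle$ and bounding the diagonal factors $\bra{L_8}\Phi_\pi\rangle\le 1$ produces the advertised $\tfrac{1}{D^8}\sum_\pi \bra{\Phi_\pi}R_8(t)\rangle$ leading term. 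Everything else---the subleading diagonal of the Weingarten matrix together with all off-diagonal entries, each multiplied by the $\{0,1\}$-valued factor $\bra{L_8}\Phi_\pi\rangle$---defines the matrix $B$, whose entries are $O(D^{-9})$ by construction.

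The main obstacle is keeping track of signs and phases: each $\bra{\Phi_\pi}R_8(t)\rangle$ is a product of spectral form factors $\mu(\pm t),\mu(\pm 2t),\ldots$ and is therefore complex in general, so promoting the exact Weingarten equality to the stated upper bound requires using $\bra{L_8}\Phi_\pi\rangle\in\{0,1\}$ consistently and packaging any residual sign or phase contributions into $B$ without inflating its $O(D^{-9})$ scaling. Beyond that, the remaining work is purely combinatorial bookkeeping over the $8!=40320$ elements of $S_8$---substantially more tedious than the enumeration performed for $S_4$ in Lemma~\ref{C var} but involving no new conceptual ingredient. No explicit enumeration of the overlaps $\bra{\Phi_\pi}R_8(t)\rangle$ is needed for the bound itself; only the diagonal/off-diagonal separation of the Weingarten matrix is invoked, with the explicit evaluation of the dominant $\ket{R_8(t)}$ overlaps deferred to the spectral expectations taken in the subsequent section.
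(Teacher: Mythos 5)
Your proposal follows essentially the same route as the paper: write $\prob{k|\rho(t)}^4$ as $\bra{L_8} C^{\otimes 8}\otimes\bar{C}^{\otimes 8}\ket{R_8(t)}$, apply the Haar projector onto $\mathrm{span}\{\ket{\Phi_\pi}\}_{\pi\in S_8}$, note $\bra{L_8}\Phi_\pi\rangle\in\{1,\delta_{xk}\}\le 1$, and split $M^{-1}$ into its dominant diagonal $\id/D^8$ plus an $O(D^{-9})$ remainder $B$. The only difference is that you cite the Weingarten asymptotic $(M^{-1})_{\pi,\sigma}=D^{-8}\delta_{\pi,\sigma}+O(D^{-9})$ as standard, while the paper derives it directly via a Neumann series for $(\id-A)^{-1}$ with $A=\id-M/D^8$ and $\|A\|_{\mathrm{op}}<1$ for large $D$; this is a cosmetic rather than substantive divergence.
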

\begin{proof}
The fourth power of the outcome probabilities can be written in the form:
\begin{align}
  &\Tr {|k\rangle\! \langle k| C F(t)C^\dagger |x\rangle\! \langle x| C F(t)^\dagger C^\dagger }^4 \nonumber \\
  &\qquad\qquad=   \langle L_8 | C^{\otimes 8} \otimes \bar{C}^{\otimes 8}|R_8(t) \rangle,
\end{align}
where $\langle L_8 |$ and $|R_8(t) \rangle$ are defined analogously to \eqref{L and R def}, but with twice the number of tensor factors.
Further, the average of the expression
$ C^{\otimes 8} \otimes \bar{C}^{\otimes 8} $ over Haar measure can be written as the projector onto the subspace spanned by the vectors
\beq
\ket{\Phi_\pi} = \left(V_\pi \otimes \id\right) \ket{\phi}_{1,9}\ket{\phi}_{2,10}\ldots\ket{\phi}_{8,16},
\eeq
where the index $\pi$ runs over the $8!$ permutations of the elements $\{1,2,\ldots,8\}$, and $V_\pi$ is the unitary permuting the first eight factor spaces according to $\pi$.

We will now determine the asymptotic scaling of the following expression with $D$:
\begin{align}
&\expect{ \prob{k | \rho(t)}^4  }{C}\nonumber\\
&\qquad = \sum_{\pi,\sigma} \langle L_8  \ket{\Phi_\pi}( M^{-1})_{\pi,\sigma} \bra{\Phi_\sigma} R_8(t) \rangle\label{4th power sum},
\end{align}
by finding an approximation for $M^{-1}$.
Recall that the matrix $M$ has components
$M_{\pi,\sigma} = \langle V_\pi | V_\sigma \rangle = {\rm Tr}[V_{\pi^{-1}} V_\sigma]  = D^{l(\pi^{-1} \sigma)}$,
where $l(\sigma)$ is number of cycles in the cycle decomposition of the permutation $\pi^{-1} \sigma$.
It is clear that the diagonal components of $M$ are all equal to
$D^8$, and all other components of $M$ are strictly $<D^8$, as the identity permutation is the only one with $8$ cycles.
Letting
$
A := \id -  {M}/{D^8},
$
we have
\beq
(\id -A)  \left(\sum_{k=0}^N A^k\right) = \left(\sum_{k=0}^N A^k\right)(\id -A)  = \id - A^{N+1},
\eeq
which converges to $\id$ in the limit $N \rightarrow \infty$ if and only if $||A||_{\rm op}<1$.
Because $A$ is of fixed size $8! \times 8!$ and all its elements are $< 1/D$, we have $||A||_{\rm op}<1$ for $D$ sufficiently large.
Therefore,
$
(\id - A)^{-1} = \id + A + O(D^{-2})
$,
which implies that
\beq
M^{-1} = \frac{\id}{D^8} + O(D^{-9}).
\eeq

Next, it is not difficult to see that all components of $\langle L_8  \ket{\Phi_\pi}$ are equal to $1$ or $\delta_{xk}$.
Therefore, we can split the sum in \eqref{4th power sum} as
\begin{align}
& \expect{  \prob{k | \rho(t)}^4  }{C}  \nonumber\\
&\qquad\leq  \frac{1}{D^8} \sum_{\pi}    \bra{\Phi_\pi} R_8(t) \rangle +
\sum_{\pi, \sigma}   B_{\pi,\sigma} \bra{\Phi_\sigma} R_8(t) \rangle,
\end{align}
where we have used that $\langle L_8  \ket{\Phi_\pi} \leq 1$ for all $\pi$, and $B = M^{-1} - \frac{\id}{D^8}$ has all components $\leq O(D^{-9})$.
\end{proof}

\subsection{Expectations over GUE eigenvalues}\label{Appendix GUE}
Our next  step towards proving Theorem~2 is to find the GUE average of the the spectral form factor $\mu(t)$, which appears in Lemmas~\ref{C var} and~\ref{C avg}.
The joint distribution over the un-ordered energy eigenvalues $\{E_a\}_{a=1}^D$ of a GUE matrix is given by (see \cite{Meh04} Theorem 3.3.1, or \cite{Haa06} Chapter 4):
\beq\label{GUE distribution}
P(\{E_a\}^D)  = \frac{1}{C_D} \prod_{1=a<b}^{D} (E_a-E_b)^{2}\
{\rm exp}\Big[ - \frac{1}{2 \sigma^2} \sum_{a=1}^D E_a^2 \Big],
\eeq
where we define $C_D  := (2\pi)^{D/2} \sigma^{D^2} \prod_{j=1}^{D} j! $.
Integration of $P(\{E_a\}^D) $ over $D-m$ variables gives the m-point correlation function (\cite{Meh04} 6.1.1)
\beq
R_m(E_1,\ldots,E_m) = \chi \int_{-\infty}^{\infty} P(\{E_a\}^D) dx_{m+1} \cdots dx_{D},
\eeq
where $\chi = {D!}/{(D-m)!}$.
By \cite{Meh04} Theorem 5.1.4, this can be written as:
\beq\label{defn of R_n}
R_m(E_1,\ldots,E_m) = {\rm det} [K_D(E_i,E_j)]_{i,j=1,\ldots,m},
\eeq
where
$K_D(E_i,E_j) =\sum_{k=0}^{D-1} \phi_k(E_i) \phi_k(E_j)$,
and the $\phi_k(x)$ are the harmonic oscillator wave-functions
$\phi_j(x) =  \left(2^j j! \sqrt{\pi}\right)^{-1/2} e^{- x^2/2}H_j(x)$,
with $H_j(x)$ the Hermite polynomials (see \cite{Meh04} section 6.2).

\subsubsection{Calculation of $\expect{|\mu(t)|^2 }{\rm spec}$}\label{form factor calc}

Our next goal is to evaluate the expression
 $\Gamma_{}(t,D):=\expect{|\mu(t)|^2 }{\rm spec}=
 \expect{   \left| \Tr{e^{-itH}} \right|^2 }{\rm spec}$.
 Expanding out the trace and grouping terms, we have
\begin{align}
\Gamma_{}(t,D) &= \expect{ \big|{\rm Tr}\big[e^{-itH}\big]\big|^2 }{\rm spec} \nonumber \\
&=
\int_{-\infty}^{\infty} \left|\sum_{l=1}^D e^{it E_l} \right|^2 P(\{E_a\}^D) \prod_{j=1}^D dE_j  \nonumber \\
&=
D \int_{-\infty}^{\infty} P(\{E_a\}^D) \prod_{j=1}^D dE_j \nonumber \\
 &\qquad+
\int_{-\infty}^{\infty} \sum_{l \neq m}^D e^{it (E_l-E_m)} P(\{E_a\}^D) \prod_{j=1}^D dE_j.
\end{align}
Using the invariance of the joint distribution $P(\{E_a\}^D)$ under permutations of the energies, and the definition of the 2-point correlation function, this becomes
\beq
\Gamma_{}(t,D)= D +
\int_{-\infty}^{\infty} e^{it (E_2-E_1)} R_2(E_1,E_2) dE_1 dE_2.
\eeq
It is well known that for large $D$ the function $K_D(E,E)$ follows the so called Wigner semi-circle law \cite{Meh04}
(with $\sigma^2 =1/2$):
\beq
K_D(E,E) \simeq
 \frac{1}{\pi}\sqrt{2D - E^2} \Theta( \sqrt{2D} - |E|),
\eeq
where $\Theta(x)$ is the Heavyside step function.  For large $D$ we then have:
\begin{align}\label{Gamma}
\Gamma_{}(t,D) &=  D +   2D \frac{J_1\left(\sqrt{2D}  t \right)^2}{t^2} \nonumber \\&\qquad-
 (\sqrt{2D}-t/2)\Theta( 2\sqrt{2D}-t),
\end{align}
where $J_1(x)$ is the first Bessel function of the first kind.
{It should also be noted that, the limit of the function $K_D(E_1,E_2)$ is generally taken by rescaling (often called `unfolding' - see \cite{GMW98} section III.A.1) the energy level density as well as the energies by the local mean spacing \cite{Haa06, Meh04}.  In taking this limit and the integrals for $\Gamma_{}(t,D)$ we have followed the the procedure of \cite{Meh04} Appendices 10, and 11.}

\subsubsection{Equilibration time}\label{Equil time}

Now that we have the GUE expectation of the spectral form factor, we can use Lemma \ref{C avg}, to find the full expectation over eigenvectors and eigenvalues of the outcome probabilities:
\begin{align}
&\expect{ \prob{k | \rho(t)} }{\rm spec,C} \nonumber\\
 &\qquad= \frac{D
- \frac{1}{D} \Gamma_{}(t,D)
+ \delta_{xk} \left( \Gamma_{}(t,D)  -1 \right) }{D^2-1},
\end{align}
Notice that if $\Gamma_{}(t,D) = O(D)$, then
\beq\label{expected prob}
\expect{ \prob{k | \rho(t)} }{\rm spec,C} = \frac{D+\delta_{xk}O(D)}{D^2-1}+O(D^{-2}).
\eeq
In particular, if $D$ is large, then the GUE expectation of the probability distribution from \eqref{expected prob} is essentially the uniform distribution.
We therefore take the \emph{equilibration time} $t_{\rm eq}(D)$ to be defined by the condition
 \beq\label{equil defn}
t_{\rm eq}(D) := \{ T \ |\ \Gamma_{}(t,D)= O(D), \forall~t>T \}.
\eeq
It is clear that the second term in \eqref{Gamma} satisfies:
\beq\label{eq:teqgue}
\lim_{t \rightarrow \infty} 2D \frac{J_1\!\left(\sqrt{2D}t \right)^2}{t^2} = 0.
\eeq
This shows that there exists a finite time $t_{\rm eq}(D)$ such that
$\Gamma_{}(t,D) =O(D)$ for all $t> t_{\rm eq}(D)$.

In order to get a sense of the equilibration time (keeping in mind our previous comments on GUE energy spectra), note that the condition on the equilibration time \eqref{equil defn} is essentially that
\beq\label{equil cond2}
2D \frac{J_1\left(\sqrt{2D}  t \right)^2}{t^2}  = O(D).
\eeq
Using the fact that for $x \gg 3/4$ we can approximate
$J_1(x) \simeq \sqrt{2/\pi x}\cos[x-3\pi/4]$, it the follows that the equilibration time is
\beq\label{equil time1}
t_{\rm eq}(D) = O(D^{-1/6}).
\eeq

\subsubsection{GUE expectation of first and second power}\label{sect GUE mean and var}

Our next step towards a proof of Theorem~2 involves evaluating the expectation values over the spectrum that are needed to prove
eqns. 
 (3) and (4) for GUE Hamiltonians.
First, from the previous section we have that there exists a finite time $t_{\rm eq}(D)$ such that
$\Gamma_{}(t,D)=O(D)$ for all $t> t_{\rm eq}(D)$.
Using Lemma \ref{C avg}, and the above spectral expectation, we have
\beq\label{eq:guemean}
\expect{ \prob{k | \rho(t)} }{spec,C} = \frac{D+\delta_{xk}O(D)}{D^2-1}+O(D^{-2}),
\eeq
which proves that for all $t > t_{\rm eq}(D)$, eqn. (3) holds for GUE Hamiltonians.

Next, Lemma \ref{C var} implies that
\begin{align}
&\expect{\var{\prob{k | \rho(t)}  }{C} }{\rm spec} \nonumber \\
&\qquad= \frac{1}{D^4}
\Big\{ D^2-2D+4 + (7-2D)\delta_{xk} \nonumber \\
&\qquad~~+ \Gamma_{}(t,D) (2 D\delta _{xk}-10 \delta _{xk}-2)+ \delta_{xk}\Gamma_{}(2t,D) \nonumber \\
&\qquad~~+  2\delta_{xk} \expect{ Re[\mu (t)^2 \mu (-2 t)] }{\rm spec} \Big\}+O(D^{-5}).
\end{align}
In the next section we will show that for $t > t_{\rm eq}(D)$,
$\expect{ Re[\mu (t)^2 \mu (-2 t)] }{\rm spec} = O(D)$.
Using this along with $\Gamma_{}(t,D) =O(D)$ in the above, we find
\beq\label{eq:guevar}
\expect{ \var{ \prob{k | \rho(t)}  }{C} }{\rm spec}=O(D^{-2}),
\eeq
which proves that for all $t > t_{\rm eq}(D)$, eqn. (3) holds.

\subsubsection{Bounding $\expect{ \mu(t)^2 \mu(-2t) }{\rm spec}$}\label{sect horrible}

In order to simplify the derivation of our upper bounds, we define
\beqa
\Delta_{}(t,D)&:=&\expect{ \mu (t)^2 \mu (-2 t) }{\rm spec}\nonumber  \\
&=&\int_{-\infty}^{\infty} \sum_{i,j,k} e^{it(E_i+E_j -2E_k)} P(\{E_a\}^D) \prod_{l=1}^D dE_l,\nonumber\\
\label{splitting sums}
\eeqa
We can expand the triple sum into three parts depending on whether $i,j,k$ are: (i) all distinct, (ii) only two equal, or (iii) all equal, which upon integration over the remaining energies, give terms of the form:
\begin{enumerate}[{\normalfont (i)}]
\vspace{-\topsep}
\item $R_3(E_1,E_2,E_3) e^{it(E_1+E_2 -2E_3)}$,
\vspace{-\topsep}
\item  $R_2(E_1,E_2) e^{2it(E_1-E_2)} + 2 R_2(E_1,E_2) e^{it(E_1-E_2)}$,
\vspace{-\topsep}
\item  $D$.
\vspace{-\topsep}
\end{enumerate}
By expanding $R_3$ and $R_2$, we find products of integrals of terms of the form $K_D(E_i,E_i) e^{it_iE_i}$, as well as $K_D(E_i,E_j)^2 e^{i(t_iE_i+t_jE_j)}$, and
\beq\label{bounding this}
K_D(E_1,E_2)K_D(E_2,E_3)K_D(E_3,E_1)e^{it(E_1+E_2-2E_3)}.
\eeq
We will bound the magnitude of these terms in a similar fashion as was done in \cite{MRA11}.
Note that the integral of \eqref{bounding this} is just
\beq\label{traces}
\Tr{P e^{iE_1} P e^{iE_2} P e^{-2iE_3}},
\eeq
where $P = \sum_{k=0}^{D-1} | \phi_k \rangle\! \langle \phi_k|$ is the projector onto the D-dimensional lower-energy subspace spanned by the harmonic oscillator wave-functions, and $X$ is the position operator.  Using the Cauchy-Schwartz inequality twice on~\eqref{traces}, we find:
\begin{align}
\left| \Tr{(P e^{iE_1} P e^{iE_2}) P e^{-2iE_3}} \right| &\leq
\sqrt{\sqrt{\Tr{P}\Tr{P}} \Tr{P}} \nonumber \\ &\leq D.
\end{align}
A similar argument also shows that terms with integrands of the form
$K_D(E_i,E_j)^2 e^{i(t_iE_i+t_jE_j)}$ are also bounded by $D$.
Using these bounds, we then have
\begin{align}
&\left|\Delta_{}(t,D)\right| \leq 6D \nonumber\\
&+ \left( \int_{-\infty}^{\infty} K_D(E,E) e^{itE} \right)^2
\left(  \int_{-\infty}^{\infty}  K_D(E,E) e^{-2itE}\right) \nonumber \\
&+\!D\!\left(  \int_{-\infty}^{\infty}  \! K_D(E,E) e^{-2itE}\right)\!
\!+\!2D\left(  \int_{-\infty}^{\infty}  \! K_D(E,E) e^{itE}\right)
\nonumber \\
&+ \left| \int_{-\infty}^{\infty}  K_D(E,E)e^{2itE} \right|^2
+  2\left|  \int_{-\infty}^{\infty}  K_D(E,E)  e^{itE}\right|^2.
\end{align}
Next recall that
\beq
\int_{-\infty}^{\infty} K_D(E,E) e^{itE} =
\sqrt{2D} \frac{J_1(\sqrt{2D}\ t)}{t},
\eeq
which approaches $0$ as $t \rightarrow \infty$.  The equilibration condition \eqref{equil defn} requires that $\sqrt{2D}J_1(\sqrt{2D}\ t)/t = O(\sqrt{D})$ for $t > t_{\rm eq}(D)$, and if this is satisfied, then for all $t > t_{\rm eq}$, we have
$\left|\Delta_{}(t,D)\right| \leq O(D)$.

\subsubsection{GUE expectation of fourth power}\label{final 4th}

The final quantity that we need to prove Theorem~2 is the asymptotic scaling of the GUE average of the
fourth moment of $\prob{k|\rho(t)}$ where $\rho(t)=e^{-iHt}\ketbra{x}{x}e^{iHt}$ for $\ket{x}$ an eigenvector of the observable.  This quantity shows a concentration of measure for the outcome variance of $\prob{k|\rho(t)}$, which will allow us to conclude that individual Hamiltonians drawn from the GUE will information theoretically equilibrate with high probability.
Specifically, we show that for $t>t_{\rm eq}(D)$ we have that
$\expect{ \prob{k | \rho(t)}^4 }{spec,C}\! =\! O(D^{-4})$.
Recalling the form of $\expect{ \prob{k | \rho(t)}^4 }{C}$ from Lemma \ref{C 4th},
we see that it is sufficient to show that for $t>t_{\rm eq}(D)$, we have
$\expect{ \bra{\Phi_\sigma} R_8(t) \rangle }{\rm spec} \leq O(D^{4})$, for all permutations $\sigma$.

Calculating some explicit examples of
\begin{align}
&\bra{\Phi_\sigma} R_8(t) \rangle \nonumber\\
&\qquad= \sum_{\substack{a,a',b,b',\\c,c',d,d'}}
e^{it(E_a -E_{a'} + E_b -E_{b'}+ E_c -E_{c'}+ E_d-E_{d'})} \nonumber \\
&\qquad\times \langle a,a',b,b',c,c',d,d' |V_\sigma | a,a',b,b',c,c',d,d'\rangle,\nonumber\\
\end{align}
we see that these are of the general form
\beq
D^{a}\mu(f_1t)\mu(g_1t)^*,\ldots, \mu(f_4t)\mu(g_4t)^*,
 \eeq
 where $f_j,g_j \in \{0,1,2,3,4\}$, and
 \beq\label{constraint}
 a+ \sum_{j=1}^4 f_j + \sum_{j=1}^4 g_j = 8,
  \eeq
  and if $f_j =0$ then the corresponding $\mu(f_jt)$ does not appear in the product (and similarly for $g_j$).
For example, there is a $|\mu(t)|^4$ term arising from $\sigma = \id$, and a $D^4$ term arising from $\sigma = (12)(34)(56)(78)$.
The expectations
$\expect{ D^{a}\mu(f_1t)\mu(g_1t)^*,\ldots \mu(f_4t)\mu(g_4t)^* }{\rm spec}$ can then be bounded in a similar fashion to Appendix \ref{sect horrible}.  In particular, we can expand the sums in the product of the terms
$\mu(f_jt)= \Tr{e^{if_jtH}}$ into various parts depending on which indices
are equal or not equal, just as was done for expression \eqref{splitting sums}.  These will then give a constant $D^b$ factor, and various combinations of integrals of m-point correlation functions.  Each of the integrals with 2-point or higher order correlation functions can be bounded by $D$, just as was done for eqn. \eqref{bounding this}.  We will then be left with various powers of integrals of the form
$\int_{-\infty}^{\infty} K_D(E,E) e^{itE}$
which as we have seen approach $0$ as $t \rightarrow \infty$, and so are irrelevant for the $t > t_{\rm eq}$ regime.  The only remaining question then is power of the  constant $D^b$ factor for each term.  It is not difficult to see that a power of $D$ arises from each pairing $\mu(f_jt)\mu(g_kt)^*$ with $f_j=g_k$.  For example, the expectation of the term $|\mu(t)|^4$ gives a contribution of $D^2$ (this is in fact the infinite time limit).  From the constraint \eqref{constraint}, it is not difficult to see that $D^4$ is the highest power of $D$ which can arise.
This shows that for $t > t_{\rm eq}$
\beq\label{eq:4thpower}
\expect{  {\rm Pr}(k|\rho(t))^4  }{{\rm spec},C} = O(D^{-4}).
\eeq

Putting all of the above together, we have:
\begin{proofof}{Theorem~2}
\change
Expanding the outcome and ensemble variances and using~\eqref{eq:guevar} and~\eqref{eq:4thpower}, we find that, for $t> t_{\rm eq}(D)$,
\begin{align}
&\var{\var{\prob{k|\rho(t)}}{k} }{\ensemble} \nonumber\\
&=\!
D^{-2} \sum_{j,k}\! \expect{ \prob{k|\rho(t)}^2 \prob{j|\rho(t)}^2 }{\ensemble}\! + O(D^{-4}).
\end{align}
Then using the Cauchy-Schwartz inequality for expectations and~\eqref{eq:guemean}, we have:
\beq
\sum_{j,k} \expect{\prob{k|\rho(t)}^2 \prob{j|\rho(t)}^2 }{\ensemble} \leq
 O(D^{-2}).
\eeq
This proves that $\var{\var{\prob{k|\rho(t)}}{k}}{\ensemble}= O(D^{-4})$.
Chebyshev's inequality then implies (in the same fashion as~\eqref{eq:chebyprob})  that
$\var{\prob{k|\rho(t)}}{k} \in O(D^{-2})$ with high probability over $\ensemble$.   Theorem 1 then implies that $O(D^{1/4})$ samples are required to distinguish the $\prob{k|\rho(t)}$ from the uniform distribution for $t>t_{\rm eq}(D)=O(D^{-1/6})$, which was shown in \eqref{equil time1}.  Since a particular $GUE$ Hamiltonian is specified using  $O(D^{2})$, even drawing a random Hamiltonian from GUE requires $O(\rm{poly}(D))$ arithmetic operations.  Therefore it follows from Definition 1 and Lemma 1 that almost all GUE Hamiltonians equilibrate information theoretically for $t>t_{\rm eq}$ with high probability.
\end{proofof}

\noindent
\emph{Acknowledgements.}
We thank  Fernando Brand\~{a}o, Carl Caves, David Cory, Patrick Hayden, Daniel Gottesman, and Victor Veitch for insightful comments.
We acknowledge funding from CIFAR, Ontario ERA, NSERC, US ARO/DTO and thank the Perimeter Institute for Theoretical Physics where this work was brought to completion.  



\end{document}